\newcolumntype{L}[1]{>{\raggedright\let\newline\\\arraybackslash\hspace{0pt}}m{#1}}
\newcolumntype{C}[1]{>{\centering\let\newline\\\arraybackslash\hspace{0pt}}m{#1}}
\tikzset{%
  highlight/.style={rectangle,rounded corners,fill=red!15,draw,fill opacity=0.3,thick,inner sep=0pt}
}
\tikzset{%
  highlight1/.style={rectangle,rounded corners,fill=blue!15,draw,fill opacity=0.3,thick,inner sep=0pt}
}
\theoremstyle{plain}
\newtheorem{thm}{Theorem}[section]
\newtheorem{lem}[thm]{Lemma}
\newtheorem{prop}[thm]{Proposition}
\theoremstyle{definition}
\newtheorem{rem}[thm]{Remark}
\numberwithin{equation}{section}
\newcommand{\Fq}{\mathbb{F}_{q}}
\newcommand{\F}{{\mathbb F}}
\newcommand{\Tr}{{\rm Tr}}
\newcommand{\dsum}{\displaystyle \sum}
\newcommand{\lcm}{\mbox{lcm}}
\begin{document}

\title[Quasi-Cyclic Subcodes of Cyclic Codes]{Quasi-Cyclic Subcodes of Cyclic Codes}

\author{Jean-Claude Belfiore}
\address{ Dept. Comelec, T\'{e}l\'{e}com ParisTech, 46 rue Barrault, 75634 Paris Cedex 13, France}
\email{belfiore@enst.fr}

\author{Cem G\"uner\.{i}\textsuperscript{1}}
\address{Sabanc{\i} University, FENS, 34956  \.{I}stanbul, Turkey}
\email{guneri@sabanciuniv.edu}

\author{Buket \"Ozkaya}
\address{ Dept. Comelec, T\'{e}l\'{e}com ParisTech, 46 rue Barrault, 75634 Paris Cedex 13, France}
\email{buket.ozkaya@telecom-paristech.fr}
\vspace{0.4cm}

\footnotetext[1]{Cem G\"{u}neri is supported by T\"{U}B\.{I}TAK
project 114F432.} \keywords{Cyclic code, quasi-cyclic code, subcode,
index, enumeration}

\begin{abstract}
We completely characterize possible indices of quasi-cyclic subcodes
in a cyclic code for a very broad class of cyclic codes. We present
enumeration results for quasi-cyclic subcodes of a fixed index and
show that the problem of enumeration is equivalent to enumeration of
certain vector subspaces in finite fields. In particular, we present
enumeration results for quasi-cyclic subcodes of the simplex code
and duals of certain BCH codes. Our results are based on the
trace representation of cyclic codes.
\end{abstract}
\maketitle

\section{Introduction}
Let $\Fq$ denote the finite field with $q$ elements, where $q$ is a
prime power, and let $m$ and $\ell$ be positive integers. A linear
code $C \subseteq \Fq^{m\ell}$ is called a quasi-cyclic (QC) code of
index $\ell$ if it is invariant under shift of codewords by $\ell$
units and $\ell$ is the minimal number with this property. Clearly
QC codes are generalizations of cyclic codes, for which $\ell=1$. QC
codes drew much attention in the literature since they yield codes
with good parameters (see for instance \cite{C,DH}). The class of QC
codes and some of its subclasses also perform well asymptotically
and reach the Gilbert-Varshamov bound (\cite{D,K,LS1,MW}).

Studying subcodes in well-known classes of codes is a common theme
in coding theory for various purposes. Our motivation to study QC
subcodes in cyclic codes stems from \cite{GM}, where the number of
rational points of supersingular curves is related to weight
analysis of certain subcodes of cyclic codes. It is shown in
\cite{GM} that these subcodes are QC codes.

We consider cyclic codes of length $q^n-1$ over $\F_q$ and assume
throughout that the dual code's zeros all have $q$-cyclotomic cosets
of length $n$ over the base field $\F_q$ (cf. Section
\ref{indices}). Note that this is true for a broad class of cyclic
codes. We have two particular problems addressed: to determine all
possible indices of QC subcodes in a given cyclic code and to count
the number of QC subcodes for a fixed index. We solve the first
problem completely and list all positive integers that are indices
of some QC subcode (Theorem \ref{index theorem}). In particular, we
observe that not every divisor of the cyclic code's length need to
be the index of some QC subcode (Remark \ref{example}). For the
second problem, we show that the enumeration of QC subcodes in a
cyclic code is related to the count of vector subspaces in finite
fields. For the class of cyclic codes we study, we show that these
two problems are equivalent (cf. Theorem \ref{enum-main}). Using
this observation, we can count QC subcodes of a given index in
certain well-known cyclic codes. Enumeration results require
counting subspaces in $\F_{q^n}$ which are defined over a subfield
$\F_{q^d}$ in a maximal way (i.e. these spaces do not have a vector
space structure over a subfield that contains $\F_{q^d}$). We give
exact count of such vector spaces in Theorem \ref{max field}, using
inclusion-exclusion principle. We utilize the trace representation
of cyclic codes (\cite{W}) to obtain our results.

Organization of the paper is as follows: We determine possible
indices of QC subcodes in a given cyclic code in Section
\ref{indices}. Enumeration of QC subcodes is addressed in Section
\ref{enum}, where the relation to counting vector subspaces in a
finite field is given. In Section \ref{simplex}, we enumerate QC
subcodes of the $q$-ary simplex code of length $q^n-1$ for any prime
power $q$ and any $n$. Section \ref{bch} contains enumeration
results for QC subcodes of duals certain BCH codes. Proofs of our
enumeration results yield an algorithm that counts indices and their
appearances (multiplicities) for certain cyclic codes. Section
\ref{examples} consists of some examples produced by the algorithm. Magma code of the algorithm is made available on-line for interested readers (\cite{Program}).

\section{Indices of QC Subcodes}\label{indices}

Let $n$ and $N$ be positive integers with $N= q^n-1$ and let
$\alpha$ a primitive $N^{th}$ root of unity. Throughout this work,
we will concentrate on the cyclic code $C$ over $\Fq$ of length $N$
with basic dual zeros
$$BZ(C^{\bot})=\{\alpha^{i_1},\ldots,\alpha^{i_s}\},$$
where $i_j\geq 1$ for all $j$ and $i_j$'s come from pairwise
distinct $q$-cyclotomic cosets mod $N$. This means that the
generating polynomial of $C^\bot$ is the product of the minimal
polynomials of $\alpha^{i_j}$'s over $\F_q$. Since $N$ and $q$ are
relatively prime these minimal polynomials are distinct.
\textbf{Moreover, we will assume throughout that the $q$-cyclotomic
coset mod $N$ for each $i_j$ has size $n$.} Note that this amounts
to saying that the minimal polynomials of $\alpha^{i_j}$'s over
$\F_q$ are all of degree $n$; or equivalently
$\F_{q^n}=\F_q(\alpha^{i_j})$ for each $j$.

Trace representation of $C$ is as follows (\cite[Proposition
2.1]{W}):
\begin{equation}\label{cyclic trace}
C=\left\{
\left(\Tr_{\F_{q^n}/\F_q}\left(\lambda_1\alpha^{ki_{1}}+\cdots+
\lambda_s\alpha^{ki_{s}}\right) \right)_{0\leq k \leq N-1} ;
\lambda_j \in \F_{q^n},\ 1 \leq j \leq s\right\}. \end{equation}
Coordinates of length $N$ codewords of $C$ are obtained by
evaluating the trace expression for each $0\leq k \leq N-1$.

Consider a subcode of $C$,
\begin{equation}\label{subcode}
C'=\left\{\left(\Tr_{\F_{q^n}/\F_q}\left(\beta_1\alpha^{ki_{1}}+\cdots+
\beta_s\alpha^{ki_{s}}\right) \right)_{0\leq k \leq N-1} ; \beta_j
\in V_j\subseteq \F_{q^n},\ 1 \leq j \leq s\right\}.
\end{equation}

The following result will play a crucial role in this article.

\begin{thm}\cite[Theorem 2.5]{G} \label{cg}
Let $i_j\geq 1$ be positive integers (for $1\leq j \leq s$) which
are in different $q$-cyclotomic cosets mod $N=q^n-1$. For
$\lambda_1,\ldots ,\lambda_s\in \F_{q^n}$, we have
\begin{equation}\label{old}
\Tr_{\F_{q^n}/ \F_q}\left(\lambda_1 x^{i_1} + \cdots + \lambda_s
x^{i_s}\right)=0, \ \ \mbox{for all $x\in \F_{q^n}$}
\end{equation}
if and only if each $i_j$ has $q$-cyclotomic coset mod $N$ of length
$|Cyc_q(i_j)|=\delta_j <n$ and $\Tr_{\F_{q^n}/
\F_{q^{\delta_j}}}(\lambda_j)=0$ for all $j=1,\ldots ,s$. In
particular, if each $i_j$ has $q$-cyclotomic coset of length $n$,
then (\ref{old}) holds if and only if $\lambda_j=0$ for all $j$.
\end{thm}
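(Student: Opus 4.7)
The plan is to turn (\ref{old}) into a polynomial identity on $\F_{q^n}$ and then read off vanishing conditions coefficient by coefficient, exploiting the distinct--cyclotomic--coset hypothesis to decouple the contributions of different indices $j$.

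I would first expand
\[
\Tr_{\F_{q^n}/\F_q}\!\Bigl(\sum_{j=1}^s \lambda_j x^{i_j}\Bigr) \;=\; \sum_{j=1}^s \sum_{k=0}^{n-1} \lambda_j^{q^k}\, x^{i_j q^k}.
\]
Because every $i_j\geq 1$, the right-hand side vanishes automatically at $x=0$, so (\ref{old}) is equivalent to vanishing on $\F_{q^n}^*$. Replacing each exponent $i_j q^k$ by its residue $e_{j,k} := i_j q^k \bmod N \in \{0,1,\dots,N-1\}$ does not alter the values on $\F_{q^n}^*$ (since $x^N=1$ there) and produces a polynomial of degree at most $N-1$. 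A nonzero polynomial of degree less than $N$ cannot vanish on an $N$-element set (Vandermonde), so (\ref{old}) is equivalent to the coefficient of $x^m$ being zero for every $m$.

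The crucial step is then to group the pairs $(j,k)$ by the value of $e_{j,k}$. Since the $i_j$'s lie in pairwise distinct $q$-cyclotomic cosets modulo $N$, each residue $m$ receives contributions from at most one index $j$. For such a $j$, write $m = i_j q^{k_0}$; the contributing $k$'s in $\{0,\dots,n-1\}$ are exactly $k_0, k_0+\delta_j,\dots,k_0+(n/\delta_j-1)\delta_j$, using $\delta_j \mid n$, which follows from $q^n\equiv 1\pmod N$. The coefficient equation for $x^m$ then reads
\[
\sum_{r=0}^{n/\delta_j - 1}\bigl(\lambda_j^{q^{k_0}}\bigr)^{q^{r\delta_j}} \;=\; \Tr_{\F_{q^n}/\F_{q^{\delta_j}}}\!\bigl(\lambda_j^{q^{k_0}}\bigr) \;=\; 0,
\]
and since the $q$-Frobenius commutes with $\Tr_{\F_{q^n}/\F_{q^{\delta_j}}}$, this condition is equivalent to $\Tr_{\F_{q^n}/\F_{q^{\delta_j}}}(\lambda_j)=0$ independently of $k_0$. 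Collecting one condition per $j$ then gives the stated equivalence, and the ``in particular'' clause is immediate since $\delta_j = n$ forces $\Tr_{\F_{q^n}/\F_{q^n}}=\mathrm{id}$. I expect the main technical obstacle to be the bookkeeping in the exponent reduction and the verification that distinct $j$'s really contribute to disjoint residue classes; this is precisely where the distinct--cyclotomic--coset hypothesis enters.
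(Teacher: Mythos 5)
Your proof is correct: expanding the trace as $\sum_{j,k}\lambda_j^{q^k}x^{i_jq^k}$, reducing exponents modulo $N$ on $\F_{q^n}^*$, invoking that a polynomial of degree less than $N$ cannot vanish on all $N$ elements of $\F_{q^n}^*$, and then grouping coefficients by cyclotomic coset (which are pairwise disjoint by hypothesis, so the indices $j$ decouple) yields exactly the stated trace conditions. The paper does not supply its own proof of Theorem \ref{cg} — it is quoted from \cite[Theorem 2.5]{G} — but your argument is the standard one and is essentially the proof given in that reference, so there is nothing substantive to contrast.
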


Theorem \ref{cg} justifies our assumption on the sizes of
$q$-cyclotomic cosets of $i_j$'s in the trace representation of the
cyclic code $C$, as can be seen in the next results and in Section
\ref{enum} when we consider enumeration of QC subcodes.

\begin{lem}
$C'$ is an $\Fq$-linear subcode if and only if $V_j$ is an
$\Fq$-linear subspace of $\F_{q^n}$ for all $j$.
\end{lem}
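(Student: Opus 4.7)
This is an if-and-only-if statement, and I plan to treat the two directions separately. For the easier direction $(\Leftarrow)$, I would observe that the assignment
$$\phi \colon \F_{q^n}^s \longrightarrow \Fq^N, \qquad (\beta_1, \ldots, \beta_s) \longmapsto \left( \Tr_{\F_{q^n}/\Fq}\bigl( \beta_1 \alpha^{k i_1} + \cdots + \beta_s \alpha^{k i_s} \bigr) \right)_{0 \le k \le N-1}$$
is $\Fq$-linear, since $\Tr_{\F_{q^n}/\Fq}$ is $\Fq$-linear and the inner sum depends $\Fq$-linearly on each $\beta_j$. By definition of $C'$ in equation (\ref{subcode}) we have $C' = \phi(V_1 \times \cdots \times V_s)$. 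Hence if each $V_j$ is an $\Fq$-subspace of $\F_{q^n}$, then $V_1 \times \cdots \times V_s$ is an $\Fq$-subspace of $\F_{q^n}^s$ and its image under $\phi$ is an $\Fq$-subspace of $\Fq^N$.

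For the nontrivial direction $(\Rightarrow)$, suppose $C'$ is $\Fq$-linear and fix $\beta_j, \beta_j' \in V_j$ together with $a, b \in \Fq$. Both $\phi(\beta_1, \ldots, \beta_s)$ and $\phi(\beta_1', \ldots, \beta_s')$ lie in $C'$, so by linearity their $\Fq$-combination also lies in $C'$. Consequently there exist $\gamma_j \in V_j$ with $\phi(\gamma_1, \ldots, \gamma_s) = \phi(a\beta_1 + b\beta_1', \ldots, a\beta_s + b\beta_s')$. Subtracting these two trace expressions and using that $\alpha^k$ ranges over $\F_{q^n}^*$ as $k$ varies (while the $x=0$ case is trivial because each $i_j \ge 1$) yields
$$\Tr_{\F_{q^n}/\Fq}\bigl( (\gamma_1 - a\beta_1 - b\beta_1') x^{i_1} + \cdots + (\gamma_s - a\beta_s - b\beta_s') x^{i_s} \bigr) = 0 \qquad \text{for all } x \in \F_{q^n}.$$
The last assertion of Theorem \ref{cg}, available precisely because every $i_j$ has full-length $q$-cyclotomic coset, forces $\gamma_j = a\beta_j + b\beta_j'$ for each $j$. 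Thus $a\beta_j + b\beta_j' \in V_j$, and specializing $a=b=0$ with any element of $V_j$ also puts $0 \in V_j$, so each $V_j$ is an $\Fq$-subspace of $\F_{q^n}$.

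The whole content of the lemma sits in this converse, and the only real step is the invocation of Theorem \ref{cg}, which under our standing hypothesis is equivalent to injectivity of the parameterization $\phi$. I do not anticipate any obstacle beyond citing this theorem correctly; what the argument makes visible is that the standing assumption on the cyclotomic coset sizes is exactly what is needed to separate the vanishing identity componentwise and therefore to read off linearity of each $V_j$ from linearity of $C'$.
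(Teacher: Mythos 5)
Your proof is correct and follows essentially the same route as the paper: the decisive step in both is the invocation of Theorem \ref{cg} (justified by the standing assumption that every $i_j$ has a full-length $q$-cyclotomic coset) to separate the vanishing trace identity componentwise and conclude $a\beta_j+b\beta_j'\in V_j$. Your explicit framing via the $\Fq$-linear parameterization $\phi$ and its injectivity is a clean packaging of the same argument, and your handling of the easy direction (image of a subspace under a linear map) is a harmless simplification.
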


\begin{proof}
Choose two arbitrary codewords from $C'$:
$$v_{\beta}=\left(\Tr_{\F_{q^n}/\F_q}\left(\dsum_{j=1}^s \beta_j\alpha^{ki_{j}}\right) \right)_{0\leq k \leq N-1}, \quad v_{\gamma}=\left(\Tr_{\F_{q^n}/\F_q}\left(\dsum_{j=1}^s
\gamma_j\alpha^{ki_{j}}\right) \right)_{0\leq k \leq N-1} \in C',$$
(i.e. $\beta_j,\gamma_j \in V_j$ for all $j$). $C'$ is an
$\Fq$-linear subcode of $C$ if and only if for any $a\in \F_q$ we
have
$$av_{\beta}+v_{\gamma}= \left(\Tr_{\F_{q^n}/\F_q}\left(\dsum_{j=1}^s
(a\beta_j+\gamma_j)\alpha^{ki_{j}}\right) \right)_{0\leq k \leq N-1}
\in C'.$$ That is, there exist $\rho_j\in V_j$ (for all $j$) such
that
$$\left(\Tr_{\F_{q^n}/\F_q}\left(\dsum_{j=1}^s
(a\beta_j+\gamma_j)\alpha^{ki_{j}}\right) \right)_{0\leq k \leq
N-1}=\left(\Tr_{\F_{q^n}/\F_q}\left(\dsum_{j=1}^s \rho_j
\alpha^{ki_{j}}\right) \right)_{0\leq k \leq N-1},$$ or equivalently
\begin{equation}\label{condition}
\Tr_{\F_{q^n}/\F_q}\left(\dsum_{j=1}^s
\left[(a\beta_j+\gamma_j)-\rho_j\right] \alpha^{ki_{j}}\right)=0 \ \
\mbox{for all $0\leq k \leq N-1$}.
\end{equation}
Since we assumed that every $i_j$ has $q$-cyclotomic coset mod $N$
of size $n$, Theorem \ref{cg} implies that (\ref{condition}) holds
if and only if  $a\beta_j+\gamma_j=\rho_j$ for all $j$ (i.e.
$a\beta_j+\gamma_j \in V_j$ for all $j$). The result follows.
\end{proof}

\textbf{We will assume from now on that $V_j$'s describing the
subcode $C'$ in (\ref{subcode}) are all $\F_q$-subspaces of
$\F_{q^n}$.} Let $T$ denote the cyclic shift operator on $\F_q^N$,
i.e. $T(u_1,u_2,\ldots ,u_N)=(u_N,u_1,\ldots ,u_{N-1})$. The
following is easy to observe.
\begin{lem}\label{shift}
$$T\left[\left(\Tr_{\F_{q^n}/\F_q}\left(\dsum_{j=1}^s
\beta_j\alpha^{ki_{j}}\right) \right)_{0\leq k \leq
N-1}\right]=\left(\Tr_{\F_{q^n}/\F_q}\left(\dsum_{j=1}^s
(\beta_j\alpha^{-i_{j}})\alpha^{ki_{j}}\right) \right)_{0\leq k \leq
N-1}.$$
\end{lem}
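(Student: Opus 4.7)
The plan is to verify the equality coordinate-by-coordinate by unpacking the definition of the cyclic shift $T$ and then factoring. Write $v = \left(\Tr_{\F_{q^n}/\F_q}\left(\sum_{j=1}^s \beta_j\alpha^{ki_j}\right)\right)_{0\leq k\leq N-1}$, so that the $k$-th coordinate of $v$ (with $k$ ranging $0,1,\ldots,N-1$) is $v_k = \Tr_{\F_{q^n}/\F_q}\left(\sum_{j=1}^s \beta_j\alpha^{ki_j}\right)$. By the definition of the shift, the $k$-th coordinate of $T(v)$ is $v_{k-1}$, where the index is read modulo $N$.

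Next, I would use the identity $\alpha^{(k-1)i_j} = \alpha^{-i_j}\cdot \alpha^{ki_j}$ inside the trace to rewrite
\[
v_{k-1} = \Tr_{\F_{q^n}/\F_q}\!\left(\sum_{j=1}^s \beta_j\alpha^{(k-1)i_j}\right) = \Tr_{\F_{q^n}/\F_q}\!\left(\sum_{j=1}^s (\beta_j\alpha^{-i_j})\alpha^{ki_j}\right),
\]
which is exactly the $k$-th coordinate of the right-hand side in the statement. Since this holds for every $k$ in the range $1\leq k\leq N-1$, the two vectors agree in all but possibly the $0$-th coordinate.

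The only place where one needs to be a bit careful is the wrap-around at $k=0$: by definition, the $0$-th coordinate of $T(v)$ equals $v_{N-1}$, not $v_{-1}$. However, since $\alpha$ is a primitive $N$-th root of unity, $\alpha^{N}=1$, so $\alpha^{(N-1)i_j} = \alpha^{-i_j} = \alpha^{-i_j}\cdot \alpha^{0\cdot i_j}$, and the same factoring argument goes through. Thus the identity holds for $k=0$ as well, completing the proof.

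I do not anticipate any real obstacle here; the lemma is essentially a bookkeeping statement about how shifting the coordinate index $k$ by $1$ interacts with the exponential $\alpha^{ki_j}$ in the trace representation, and the assumption about the $q$-cyclotomic cosets of the $i_j$'s (used heavily elsewhere) is not needed for this particular computation.
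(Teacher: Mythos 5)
Your proof is correct, and it is exactly the direct coordinate-by-coordinate verification the paper has in mind (the paper omits the proof, calling the lemma "easy to observe"). You handle the only delicate point — the wrap-around at $k=0$ via $\alpha^{N}=1$ — correctly, and you are right that the cyclotomic-coset assumption plays no role here.
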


For each $1 \leq j \leq s$, let us now define the following subcode
of $C'$:
$$C'_j= \left\{\left(\Tr_{\F_{q^n}/\F_q}\left(\beta_j\alpha^{ki_{j}}\right)\right)_{0\leq k \leq N-1} ; \beta_j
\in V_j\right\}.$$

Next, we obtain a criterion for quasi-cyclicity of $C_j'$.

\begin{prop} \label{ff0}
$C'_j$ is an index $\ell_j$ QC code if and only if $V_j$ is an
$\Fq(\alpha^{\ell_j i_{j}})$-subspace of $\F_{q^n}$ and $\ell_j$ is
the minimal such number.
\end{prop}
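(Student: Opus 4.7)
The plan is to exploit Lemma \ref{shift} together with the injectivity statement in Theorem \ref{cg} to translate the quasi-cyclic shift condition on $C'_j$ directly into a multiplicative stability condition on $V_j$. By definition, $C'_j$ is QC of index $\ell_j$ precisely when $\ell_j$ is the minimal positive integer with $T^{\ell_j}(C'_j) = C'_j$. I would begin by iterating Lemma \ref{shift} $\ell_j$ times on a generic codeword indexed by $\beta_j \in V_j$, yielding
$$T^{\ell_j}\!\left[\left(\Tr_{\F_{q^n}/\F_q}(\beta_j \alpha^{k i_j})\right)_{0\le k\le N-1}\right] = \left(\Tr_{\F_{q^n}/\F_q}\!\left((\beta_j \alpha^{-\ell_j i_j})\,\alpha^{k i_j}\right)\right)_{0\le k\le N-1}.$$
Thus $T^{\ell_j}$ acts on the trace representation of $C'_j$ simply by replacing the coefficient $\beta_j$ by $\beta_j \alpha^{-\ell_j i_j}$, which reduces the analysis of $T^{\ell_j}$-invariance to an analysis of multiplication by $\alpha^{-\ell_j i_j}$ acting on $V_j$.

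Next I would argue that $T^{\ell_j}(C'_j) \subseteq C'_j$ if and only if $\alpha^{-\ell_j i_j} V_j \subseteq V_j$. The ``if'' direction is immediate. For ``only if'', the inclusion forces, for each $\beta_j \in V_j$, the existence of some $\beta'_j \in V_j$ with $\Tr_{\F_{q^n}/\F_q}\bigl((\beta_j \alpha^{-\ell_j i_j} - \beta'_j)\alpha^{k i_j}\bigr) = 0$ for every $k \in \{0, \ldots, N-1\}$. Because the $q$-cyclotomic coset of $i_j$ has size $n$ by the standing hypothesis, the single-summand version of Theorem \ref{cg} applies and forces $\beta'_j = \beta_j \alpha^{-\ell_j i_j}$, so $\alpha^{-\ell_j i_j} V_j \subseteq V_j$. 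Since $V_j$ is a finite-dimensional $\F_q$-space and multiplication by $\alpha^{-\ell_j i_j}$ is a bijection of $\F_{q^n}$, this containment is automatically an equality, equivalent to $\alpha^{\ell_j i_j} V_j = V_j$.

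Finally, since $V_j$ is already an $\F_q$-subspace, the invariance $\alpha^{\ell_j i_j} V_j = V_j$ is equivalent to $V_j$ being a module over the $\F_q$-algebra generated by $\alpha^{\ell_j i_j}$. As $\alpha^{\ell_j i_j}$ is algebraic over $\F_q$, this algebra coincides with the field $\F_q(\alpha^{\ell_j i_j})$, so the condition is precisely that $V_j$ is an $\F_q(\alpha^{\ell_j i_j})$-subspace of $\F_{q^n}$. Taking the minimal $\ell_j$ through every step of the chain of equivalences yields the full statement. The main technical ingredient, and the only nontrivial step, is the appeal to Theorem \ref{cg}: this is exactly where the size-$n$ cyclotomic coset hypothesis is used, for otherwise the parametrization $\beta_j \mapsto (\Tr_{\F_{q^n}/\F_q}(\beta_j \alpha^{k i_j}))_k$ would fail to be injective and the reduction of the shift-invariance condition to multiplication on $V_j$ would break down.
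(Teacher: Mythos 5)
Your proposal is correct and follows essentially the same route as the paper's proof: iterate Lemma \ref{shift} to reduce the $\ell_j$-shift to multiplication of $V_j$ by $\alpha^{-\ell_j i_j}$, invoke the full-cyclotomic-coset case of Theorem \ref{cg} for the converse direction, and identify invariance of the $\F_q$-subspace $V_j$ under that multiplication with being a subspace over $\F_q(\alpha^{\ell_j i_j})$. Your added remarks (containment implies equality by finiteness, $\F_q[\alpha^{\ell_j i_j}]=\F_q(\alpha^{\ell_j i_j})$ by algebraicity) only make explicit what the paper leaves implicit.
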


\begin{proof}
According to Lemma \ref{shift}, $C'_j$ is closed under shift of
codewords by $t$ units if $V_j$ is closed under multiplication by
$\alpha^{-ti_{j}}$. Since $V_j$ is an $\Fq$-subspace of $\F_{q^n}$,
this is equivalent to saying that $V_j$ is closed under scalar
multiplication by elements in
$$\Fq[\alpha^{-ti_{j}}]=\Fq(\alpha^{-ti_{j}})=\Fq(\alpha^{ti_{j}}).$$
Hence, if $V_j$ is an $\Fq(\alpha^{\ell_j i_{j}})$-subspace of
$\F_{q^n}$ and $\ell_j$ is the minimal such number, then $C_j'$ is
an index $\ell_j$ QC code.

For the converse, suppose $C'_j$ is an index $\ell_j$ QC code. So,
given $\beta_j \in V_j$, there exists $\gamma_j\in V_j$ such that
$$\left(\Tr_{\F_{q^n}/\F_q}\left((\beta_j\alpha^{-\ell_ji_{j}})\alpha^{ki_{j}}\right) \right)_{0\leq k \leq
N-1}=\left(\Tr_{\F_{q^n}/\F_q}\left(\gamma_j\alpha^{ki_{j}}\right)
\right)_{0\leq k \leq N-1},$$ or equivalently
$$\Tr_{\F_{q^n}/\F_q}\left((\beta_j\alpha^{-\ell_ji_{j}}-\gamma_j)\alpha^{ki_{j}}\right) = 0 \ \ \mbox{for all $0\leq k \leq N-1$.}
$$
By assumption $i_j$ has size $n$ $q$-cyclotomic coset mod $N$.
Therefore by Theorem \ref{cg}, the above equality holds for all $k$
if and only if $\beta_j\alpha^{-\ell_ji_{j}}-\gamma_j=0$. In other
words $V_j$ is closed under multiplication by
$\alpha^{-\ell_ji_{j}}$ and $\ell_j$ is the minimal such positive
integer. Hence the result follows.
\end{proof}

We need two more facts on finite fields. Proofs are clear, hence
omitted.

\begin{prop} \label{ff}
(i) Consider $\Fq \subseteq \F_{q^d} \subseteq \F_{q^n}$ and let
$\alpha$ be a primitive element in $\F_{q^n}$. Then
$$L_d=\dfrac{q^n-1}{q^d-1}$$ is the least positive integer such
that $\alpha^{L_d} \in \F_{q^d}$. Moreover, $\alpha^{L_d}$ is a
primitive element of $\F_{q^d}$.

(ii) For $\alpha^i \in \F_{q^n}$, the least positive integer
$\ell_d$ satisfying $\alpha^{i\ell_d} \in \F_{q^d}$ is
$$\ell_d=\frac{\lcm(i,L_d)}{i}.$$
\end{prop}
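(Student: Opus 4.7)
The plan is to prove both parts by passing to the multiplicative-order formulation and exploiting the standard characterization of $\F_{q^d}$ as the fixed field of the Frobenius $x\mapsto x^{q^d}$ (equivalently, as the set of $N$-th roots of unity whose order divides $q^d-1$).

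For part (i), I would first observe that $S:=\{k\in\mathbb{Z}:\alpha^k\in\F_{q^d}\}$ is a subgroup of $\mathbb{Z}$, because $\F_{q^d}^{\times}$ is closed under multiplication and inversion inside $\F_{q^n}^{\times}$. Hence $S=L\mathbb{Z}$ for a unique positive integer $L$, and $L$ is precisely the least positive integer with $\alpha^L\in\F_{q^d}$. Since $\alpha^{q^n-1}=1\in\F_{q^d}$, one has $L\mid q^n-1$. Next I would use the classical characterization: for any integer $k$,
\[
\alpha^k\in\F_{q^d}\iff(\alpha^k)^{q^d-1}=1\iff(q^n-1)\mid k(q^d-1)\iff L_d\mid k,
\]
where the last step uses $L_d=(q^n-1)/(q^d-1)$ together with $\gcd\bigl((q^n-1)/(q^d-1),\,q^d-1\bigr)$ not mattering for the divisibility (one simply divides both sides of $(q^n-1)\mid k(q^d-1)$ by $q^d-1$). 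This identifies $L=L_d$. For the primitivity claim, note that the order of $\alpha^{L_d}$ in $\F_{q^n}^{\times}$ is $(q^n-1)/\gcd(L_d,q^n-1)=(q^n-1)/L_d=q^d-1=|\F_{q^d}^{\times}|$, so $\alpha^{L_d}$ generates $\F_{q^d}^{\times}$.

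For part (ii), I would simply apply part (i): $\alpha^{i\ell_d}\in\F_{q^d}$ iff $L_d\mid i\ell_d$. The least positive $\ell_d$ satisfying this divisibility is, by elementary number theory, $\ell_d=L_d/\gcd(i,L_d)$, which equals $\lcm(i,L_d)/i$ via the identity $i\cdot L_d=\gcd(i,L_d)\cdot\lcm(i,L_d)$.

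The proof is essentially a bookkeeping exercise in multiplicative orders; there is no real obstacle. The only mildly delicate point is justifying the central equivalence $(q^n-1)\mid k(q^d-1)\Leftrightarrow L_d\mid k$, which I would handle by dividing both sides of the first divisibility by $q^d-1$ (permissible because $q^d-1\mid q^n-1$), yielding $L_d\mid k$ directly. Everything else then reduces to standard facts about cyclic subgroups of $\F_{q^n}^{\times}$.
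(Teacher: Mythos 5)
Your proof is correct; the paper itself omits the argument ("Proofs are clear, hence omitted"), and what you have written is precisely the standard order-theoretic justification the authors had in mind: the subgroup $S=L\mathbb{Z}$ of exponents landing in $\F_{q^d}$, the equivalence $\alpha^k\in\F_{q^d}\iff (q^n-1)\mid k(q^d-1)\iff L_d\mid k$ (cancelling the common factor $q^d-1$, legitimate since $d\mid n$), and the reduction of (ii) to $L_d\mid i\ell_d$. Nothing further is needed.
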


The following result describes the set of possible indices for QC
subcodes in $C$.
\begin{thm}\label{index theorem}
(i) Consider the $\Fq$-linear subcode $C'$ of the cyclic code $C$
(cf. (\ref{cyclic trace}) and (\ref{subcode})). For $1\leq j \leq
s$, let $\F_{q^{d_j}}$ be the largest intermediate field in
$\F_{q^n} / \Fq$ such that $V_j$ is an $\F_{q^{d_j}}$-subspace of
$\F_{q^n}$. Let $$L_j = \dfrac{q^n-1}{q^{d_j}-1} \ \mbox{\ and\ } \
\ell_j=\dfrac{\lcm(i_j,L_j)}{i_j}, \ \forall j$$ and let
$\ell=\lcm(\ell_1,\ldots,\ell_s)$. If $\ell \neq q^n-1 = N$, then
$C'$ is an index $\ell$ QC subcode of $C$.

(ii) Let $\F_{q^{d_1}},\ldots,\F_{q^{d_u}}$ be the intermediate
fields of the extension $\F_{q^n} / \Fq$, where $d_1=1$ and $d_u=n$.
Let $L_j = \dfrac{q^n-1}{q^{d_j}-1}$ for each $1\leq j \leq u$ and
define the integers
$$\begin{array}{ccc}
  \ell_1^1=\dfrac{\lcm(i_1,L_1)}{i_1} & \cdots & \ell_s^1=\dfrac{\lcm(i_s,L_1)}{i_s} \\
  \vdots &  & \vdots \\
  \ell_1^u=\dfrac{\lcm(i_1,L_u)}{i_1} & \cdots & \ell_s^u=\dfrac{\lcm(i_s,L_u)}{i_s}
\end{array}.$$
For $$\begin{array}{l}
I=\left\{\lcm\left(\ell_{t_1}^{a_1},\ldots,\ell_{t_m}^{a_m}\right);
1\leq m \leq s, \
1\leq a_1,\ldots,a_m \leq u, \ 1\leq t_1,\ldots,t_m \leq s, \right.\\
  \left. \hskip10cm  t_i \neq t_j \ \mbox{for any } i\neq j \right\}\setminus \{q^n-1\},\end{array}$$ $C$ has QC subcodes of
index $\ell$ for every $\ell \in I$ and has no QC subcode of
different index.
\end{thm}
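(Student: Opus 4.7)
The plan is to reduce part (i) to Proposition \ref{ff0} applied to each summand of the decomposition $C' = C'_1 \oplus \cdots \oplus C'_s$, and then to realize part (ii) by constructing an explicit subcode for each candidate index in $I$.

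For part (i), I would first observe that Theorem \ref{cg} makes the assignment $(\beta_1,\ldots,\beta_s) \in V_1 \times \cdots \times V_s \mapsto v_\beta$ injective, giving the direct-sum decomposition $C' = C'_1 \oplus \cdots \oplus C'_s$ inside $\F_q^N$. By Lemma \ref{shift}, the shift $T$ preserves each summand individually, hence $T^\ell(C') = C'$ if and only if $T^\ell(C'_j) = C'_j$ for every $j$. I would then apply Proposition \ref{ff0} to each summand: the minimal $\ell_j$ for which $C'_j$ is QC of index $\ell_j$ is the least positive integer such that $V_j$ is an $\F_q(\alpha^{\ell_j i_j})$-subspace of $\F_{q^n}$. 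Since $\F_{q^{d_j}}$ is by hypothesis the largest intermediate field over which $V_j$ is a vector space, this condition is equivalent to $\F_q(\alpha^{\ell_j i_j}) \subseteq \F_{q^{d_j}}$, i.e.\ to $\alpha^{\ell_j i_j} \in \F_{q^{d_j}}$, and Proposition \ref{ff}(ii) identifies the minimal such $\ell_j$ as $\lcm(i_j,L_j)/i_j$. The overall QC index of $C'$ is therefore $\lcm(\ell_1,\ldots,\ell_s)$; if this LCM equals $N$ then the subcode has no nontrivial shift symmetry, which accounts for the hypothesis $\ell \neq N$ in the statement.

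For part (ii), the forward direction is immediate from part (i): given any $\F_q$-linear subcode of $C$, select the nonzero $V_{t_i}$'s and record the index $a_i$ of their maximal subfield of scalars; part (i) then produces the index $\lcm(\ell_{t_1}^{a_1},\ldots,\ell_{t_m}^{a_m}) \in I$ (a trivial $V_j = 0$ contributes $\ell_j^u = 1$ and can be dropped from the LCM). For the reverse direction, given data $(m,(t_i),(a_i))$ defining an element of $I$, I would construct the subcode explicitly by setting $V_{t_i} := \F_{q^{d_{a_i}}}$ and $V_j := 0$ for $j \notin \{t_1,\ldots,t_m\}$. The maximal field of scalars of $V_{t_i} = \F_{q^{d_{a_i}}}$ is exactly $\F_{q^{d_{a_i}}}$, since any intermediate subfield $\F_{q^e}$ over which $V_{t_i}$ is a subspace must satisfy $\F_{q^e}\cdot 1 \subseteq V_{t_i} = \F_{q^{d_{a_i}}}$ and hence $\F_{q^e} \subseteq \F_{q^{d_{a_i}}}$. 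Part (i) then yields the desired index $\lcm(\ell_{t_1}^{a_1},\ldots,\ell_{t_m}^{a_m})$, and removing $N$ from the set of realizable values gives exactly $I$.

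The main obstacle is the equivalence inside part (i) linking ``$V_j$ is $\F_q(\alpha^{\ell_j i_j})$-linear'' (from Proposition \ref{ff0}) with ``$\alpha^{\ell_j i_j} \in \F_{q^{d_j}}$'' (needed to apply Proposition \ref{ff}(ii)). The crucial ingredient here is the maximality of $d_j$ in the hypothesis, together with the injectivity from Theorem \ref{cg} that guarantees shift conditions on codewords correspond faithfully to closure conditions on the individual $V_j$. Once this link is in place, the remainder of part (i) is LCM-bookkeeping across the $j$'s, and part (ii) reduces to the explicit-construction step sketched above.
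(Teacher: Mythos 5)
Your proposal is correct and follows essentially the same route as the paper: reduce to the summands $C'_j$ via Proposition \ref{ff0}, identify each $\ell_j$ through Proposition \ref{ff} and the maximality of $\F_{q^{d_j}}$, take the least common multiple, and realize each element of $I$ in part (ii) by choosing $V_{t_j}=\F_{q^{d_{a_j}}}$ and the remaining $V_j=0$. Your explicit justification of the direct-sum decomposition via Theorem \ref{cg} and of why the index of $C'$ is the lcm of the indices of the $C'_j$'s merely fills in a step the paper declares ``clear.''
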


\begin{proof}
(i) Note that if some $V_j=\{0\}$, then it can be considered
``maximally" as a vector space over $\F_{q^n}$. Then the index of
$C_j'$ should be 1 since $\{0\}$ is a vector space over any subfield
of $\F_{q^n}$, and in particular over $\F_q(\alpha^{i_j})$ (cf.
Proposition \ref{ff0}). In this case, indeed, $L_j=1$ and hence
$\ell_j=1$. Now suppose $V_j$ is not zero and let $\F_{q^{d_j}}$ be
the largest field over which it is a vector space. By Proposition
\ref{ff}, the least power of $\alpha^{i_j}$ that lies in
$\F_{q^{d_j}}$ is $\ell_j$. So, $\F_q(\alpha^{\ell_j i_j})\subseteq
\F_{q^{d_j}}$ and hence $V_j$ is an $\F_q(\alpha^{\ell_j
i_j})$-space. By maximality of $d_j$, any field over which $V_j$ has
a vector space structure must be contained in $\F_{q^{d_j}}$.
Therefore, $\ell_j$ is indeed the index of $C'_j$ by Proposition
\ref{ff0}. It is clear that the index of $C'$ is the least common
multiple of indices of all $C'_j$'s.

(ii) Part (i) shows that the index of a QC subcode of $C$ lies in
$I$. Take an element
$\ell=\lcm\left(\ell_{t_1}^{a_1},\ldots,\ell_{t_m}^{a_m}\right)$ of
$I$. Consider the subcode $C'$ where $V_{t_j}$ (for $j=1,\ldots ,m$)
is ``maximally" defined over the intermediate field
$\F_{q^{d_{a_j}}}$ (e.g. $V_{t_j}=\F_{q^{d_{a_j}}}$) and all other
$V_j$'s equal $\{0\}$. Then the index of $C'_{t_j}$ is
$\ell^{a_j}_{t_j}$ for all $j=1,\ldots ,m$. Hence the index of $C'$
is $\ell$. Therefore for any element of $I$, there is a QC subcode
of $C$ of that index.
\end{proof}

\begin{rem}
Let us note that the exponents $i_j$'s in the trace representation
of the cyclic code $C$ can be any representative of a $q$-cyclotomic
coset mod $N$. In other words, replacing $i_j$ by $qi_j$ in
(\ref{cyclic trace}) still yields the same code. Let us observe that
the choice of cyclotomic coset representatives in $C$'s trace
representation does not affect indices of QC subcodes either. For
this, let $j\in \{1,\ldots,s\}$ and note that $L_{a}$ is relatively
prime to $q$ for any $1\leq a \leq u$ (with the notation of Theorem
\ref{index theorem}). Then,
$$\ell_{qi_j}^{a}=\dfrac{\lcm(qi_j,L_{a})}{qi_j}=\dfrac{q \lcm(i_j,L_a)}{q i_j}=\dfrac{\lcm(i_j,L_a)}{i_j}=\ell_{i_j}^{a}.$$
\end{rem}

\begin{rem}
Let us now show that a different primitive element choice for
$\F_{q^n}$ does not change the set $I$ of possible indices for QC
subcodes. Let $\eta$ be another primitive element in $\F_{q^n}$.
Then $\eta= \alpha^r$ for some $1< r <q^n-1$ and $\gcd(r,q^n-1)=1$.
Then,
$$BZ(C^\bot)=\{\alpha^{i_1},\ldots,\alpha^{i_s}\}=\{\eta^{j_1},\ldots,\eta^{j_s}\},$$
for some $j_1,\ldots ,j_s$. Note that for each $t\in\{1,\ldots
,s\}$, this means $\alpha^{rj_t}=\alpha^{i_t}$. In other words,
$$rj_t=i_t+k_t(q^n-1), \ \ 1\leq t \leq s ,$$
for some integers $k_1,\ldots ,k_s$. For $a\in \{1,\ldots ,u\}$, let
$\nu$ be such that
$$\left(i_t+k_t(q^n-1)\right)\nu =\lcm\left(i_t+k_t(q^n-1),L_a \right).$$
Since $L_a$ divides $k_t(q^n-1)$, we conclude that $\nu$ is the
smallest positive integer so that $i_t\nu$ is a multiple of $L_a$
(i.e. $i_t\nu=\lcm(i_t,L_a)$). Hence,
$$\frac{\lcm\left(i_t+k_t(q^n-1),L_a \right)}{i_t+k_t(q^n-1)} = \frac{\left(i_t+k_t(q^n-1)\right)\nu}{i_t+k_t(q^n-1)}= \nu=\frac{\lcm(i_t,L_a)}{i_t}.$$
Hence, contributions of $i_t$ and $j_t$ to the index relative to the
intermediate field $\F_{q^{d_a}}$ are the same.
\end{rem}

\begin{rem}\label{example}
A natural question is whether a cyclic code has an index $\ell$ QC
subcode for every divisor $\ell$ of its length. This is not
necessarily the case as the following example shows. Consider the
binary cyclic code $C$ of length 15 with the trace representation
$$C=\left\{
\left(\Tr_{\F_{2^4}/\F_2}\left(\lambda\alpha^{k}\right)
\right)_{0\leq k \leq 14} ; \lambda \in \F_{2^4}\right\}. $$ With
the notation of Theorem \ref{index theorem}, we have $i_1=1$,
$d_1=1$, $d_2=2$, $d_3=4$. Moreover,
$$L_1=15, \ L_2=5, \ L_3=1,$$
and
$$\ell_1^1=\frac{\lcm(1,15)}{1}=15, \ \ell_1^2=\frac{\lcm(1,5)}{1}=5, \ \ell_1^3=\frac{\lcm(1,1)}{1}=1.$$
Hence, $C$ has only cyclic and index 5 QC subcodes but no index 3 QC
subcode, which is the other divisor of its length. An index 5 QC
subcode can be easily obtained from Theorem \ref{index theorem} and
here is an example:
$$
C'=\left\{ \left(\Tr_{\F_{2^4}/\F_2}\left(\beta\alpha^{k}\right)
\right)_{0\leq k \leq 14} ; \beta \in \F_{2^2}\right\}.
$$
Note that $C'$ is a rather ``small" code, let us list its nonzero
codewords:
\begin{eqnarray*}
c_1 &=& (0, 1, 1, 0, 1, 0, 1, 1, 1, 1, \textbf{0, 0, 0, 1, 0}),\\
c_2 &=& (\textbf{0, 0, 0, 1, 0}, 0, 1, 1, 0, 1, 0, 1, 1, 1, 1),\\
c_3 &=& (0, 1, 1, 1, 1, \textbf{0, 0, 0, 1, 0}, 0, 1, 1, 0, 1).
\end{eqnarray*}
Clearly, the codewords of $C'$ are invariant relative to 5-shift.

Observe that $C$ is the binary simplex code of length 15. We will
have a complete investigation of QC subcodes of the simplex code in
Section \ref{simplex}.
\end{rem}

\section{Enumeration of QC Subcodes}\label{enum}
We saw in Section \ref{indices} that QC subcodes of a cyclic code
$C$ and their indices are determined by the vector subspaces of
$\F_{q^n}$ that determine the coefficients of terms in the trace
representation and the maximal intermediate fields over which they
have vector space structure. Hence, enumeration of QC subcodes of a
fixed index in $C$ is clearly related to the number of vector
subspaces in $\F_{q^n}$. Due to trace, however, various choices of
coefficient vector spaces may yield the same subcode. The following
result shows that this is not the case in our setting.

We will continue considering the cyclic code $C$ with the trace
representation in (\ref{cyclic trace}). For vector spaces
$V_1,\ldots ,V_s \subseteq \F_{q^n}$, let us denote the subcode
determined by them as in (\ref{subcode}) by $C_V$.

\begin{thm} \label{enum-main}
Consider the cyclic code $C$ in (\ref{cyclic trace}) and let
$V=(V_1,\ldots ,V_s)$ and $W=(W_1,\ldots ,W_s)$ be collection of
vector subspaces in $\F_{q^n}$ that determine subcodes $C_V$ and
$C_W$ as in (\ref{subcode}). Then, $C_V=C_W$ if and only if
$V_j=W_j$ for all $j$. In particular, the number of QC subcodes of
index $\ell$ in $C$ is the same as the number of vector space
choices $V_1,\ldots ,V_s \subseteq \F_{q^n}$ that yield this index.
\end{thm}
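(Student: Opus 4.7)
The plan is to split the biconditional into two directions and reduce the nontrivial one directly to Theorem~\ref{cg}, exploiting the standing assumption that every $i_j$ has a $q$-cyclotomic coset of size $n$. The ``if'' direction is immediate from the definition in (\ref{subcode}): if $V_j = W_j$ for every $j$, then the defining sets of codewords of $C_V$ and $C_W$ coincide verbatim.

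For the ``only if'' direction I would argue as follows. Assume $C_V = C_W$, pick any $(\beta_1,\ldots,\beta_s) \in V_1 \times \cdots \times V_s$, and form the codeword
$$v_\beta = \left(\Tr_{\F_{q^n}/\F_q}\left(\sum_{j=1}^s \beta_j \alpha^{k i_j}\right)\right)_{0 \leq k \leq N-1} \in C_V = C_W.$$
Because $v_\beta \in C_W$, there is $(\gamma_1,\ldots,\gamma_s) \in W_1 \times \cdots \times W_s$ with $v_\beta = v_\gamma$. Subtracting the two trace expressions gives
$$\Tr_{\F_{q^n}/\F_q}\left(\sum_{j=1}^s (\beta_j - \gamma_j)\alpha^{k i_j}\right) = 0, \quad 0 \leq k \leq N-1.$$
Since $\alpha$ is a primitive element of $\F_{q^n}$, as $k$ runs through $0,\ldots,N-1$ the value $\alpha^k$ runs through all of $\F_{q^n}^\ast$, and at $x=0$ the identity is trivial; hence the hypothesis of Theorem~\ref{cg} is met on all of $\F_{q^n}$. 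Since each $i_j$ has $q$-cyclotomic coset of length exactly $n$, Theorem~\ref{cg} forces $\beta_j - \gamma_j = 0$ for every $j$. Thus $\beta_j \in W_j$, proving $V_j \subseteq W_j$; the symmetric argument gives the reverse inclusion, so $V_j = W_j$ for each $j$.

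The final sentence of the theorem is then a formal corollary: the assignment $(V_1,\ldots,V_s) \mapsto C_V$ on tuples of $\F_q$-subspaces of $\F_{q^n}$ is injective, so counting QC subcodes of $C$ of prescribed index $\ell$ is the same as counting tuples $(V_1,\ldots,V_s)$ whose associated index, computed as in Theorem~\ref{index theorem}(i), equals $\ell$.

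I do not anticipate a genuine obstacle here; the content is essentially a direct invocation of Theorem~\ref{cg}. The one small point to check carefully is the translation from the ``for all $0 \leq k \leq N-1$'' form produced by the codeword condition into the ``for all $x \in \F_{q^n}$'' hypothesis required by Theorem~\ref{cg}, which is legitimate precisely because $\alpha$ is primitive in $\F_{q^n}$ and the trace vanishes at $x=0$ for any positive exponent.
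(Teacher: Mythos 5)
Your proof is correct and follows essentially the same route as the paper: reduce equality of codewords to the vanishing of a trace expression for all $x\in\F_{q^n}$ and invoke Theorem~\ref{cg} under the full-cyclotomic-coset assumption. Your explicit note on passing from ``for all $0\le k\le N-1$'' to ``for all $x\in\F_{q^n}$'' via primitivity of $\alpha$ is a detail the paper leaves implicit, but the argument is the same.
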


\begin{proof}
Let $\lambda_j\in V_j$ for all $j$ and consider the codeword
$$c_\lambda=\left(\Tr_{\F_{q^n}/ \F_{q}}\left(\sum_{j=1}^{s} \lambda_j \alpha^{ki_j}\right)\right)_k \in C_V.$$
This codeword also belongs to $C_W$ if and only if there exists
$\beta_j\in W_j$ such that
$$c_\lambda=c_\beta=\left(\Tr_{\F_{q^n}/ \F_{q}}\left(\sum_{j=1}^{s} \beta_j \alpha^{ki_j}\right)\right)_k .$$
This holds exactly when
\begin{equation}\label{cond}
\Tr_{\F_{q^n}/ \F_{q}}\left((\lambda_1-\beta_1)x^{i_1}+\cdots +
(\lambda_s-\beta_s)x^{i_s}\right)=0 \ \mbox{for all $x\in
\F_{q^n}$}.
\end{equation}
Since each $i_j$ has full size $q$-cyclotomic coset by assumption,
Theorem \ref{cg} implies that (\ref{cond}) holds if and only if
$\lambda_j=\beta_j$ for all $j$.
\end{proof}

\begin{rem}
We can loosen the assumption on the cyclotomic cosets of $i_j$'s and
still write a criterion for equality of subcodes $C_V$ and $C_W$.
Namely, suppose that $$|Cyc_q(i_j)|=n \ \mbox{for $1\leq j \leq r$},
\ \ |Cyc_q(i_j)|=\delta_j <n \ \mbox{for $r+1\leq j \leq s$}.$$
Theorem \ref{cg} implies that $C_V=C_W$ if and only if
$$V_j=W_j \ \mbox{for $1\leq j \leq r$ and } \Tr_{\F_{q^n}/ \F_{q^{\delta_j}}}(V_j)=\Tr_{\F_{q^n}/ \F_{q^{\delta_j}}}(W_j) \ \mbox{for $r+1\leq j \leq s$ (cf. (\ref{cond}))}.$$
\end{rem}

Let us recall that the number of nonzero $\F_q$-subspaces in
$\F_{q^n}$ is determined by $q$-binomial coefficients as
\begin{equation} \label{binom}
N(n,q)= \dsum_{k=1}^{n}{n\choose k}_q =
\dsum_{k=1}^{n}\dfrac{(q^n-1)(q^n-q)\cdots(q^n-q^{k-1})}{(q^k-1)(q^k-q)\cdots(q^k-q^{k-1})}.
\end{equation}
Note that each $k\in \{1,\ldots ,n\}$ counts $\F_q$-subspaces of
dimension $k$. \textbf{In the rest of the manuscript, the number of
subspaces will refer to this number, which excludes the zero
subspace.}

The following result provides the number of nonzero vector subspaces
in $\F_{q^n}$ maximally defined over an intermediate field of the
extension $\F_{q^n}/\F_q$. It will be used in the following sections
to obtain enumeration results for QC subcodes of certain cyclic
codes. We first introduce some notation.

Let $n=u_1^{a_1}u_2^{a_2}\cdots u_t^{a_t}$, where $u_i$'s are
pairwise distinct prime numbers and $a_i$'s are nonnegative
integers. For $\displaystyle{\vec{i}=(i_1,\ldots ,i_t)\in
\prod_{1\leq j\leq t} \{0,1,\ldots , a_j\}}$, we denote the
intermediate field $\F_{q^{u_1^{i_1}\cdots u_t^{i_t}}}$ by
$\F_{i_1,\ldots ,i_t}$. For any $1\leq j_1<j_2<\cdots <j_v\leq t$,
we let
\begin{equation}\label{intsection}
N_{\vec{i}}(j_1,\ldots ,j_v):= N\left(\cdots
u_{i_{j_1}}^{a_{j_1}-i_{j_1}-1}\cdots
u_{i_{j_v}}^{a_{j_v}-i_{j_v}-1}\cdots ,q^{\cdots
u_{i_{j_1}}^{i_{j_1}+1}\cdots u_{i_{j_v}}^{i_{j_v}+1}\cdots}
\right).
\end{equation}
In (\ref{intsection}), only the terms corresponding to $j_\nu$'s are
written. For any $\mu \in \{1,\ldots ,t\}\setminus \{j_1,\ldots
,j_v\}$, the exponent of $u_\mu$ in the dimension part of the
expression is $a_\mu-i_\mu$ and the exponent of $u_\mu$ in the field
size part is $i_\mu$. Finally, we will assume that
\begin{equation}\label{conv}
N_{\vec{i}}(j_1,\ldots ,j_v)=0, \ \mbox{if $i_{j_\nu}+1>a_{j_\nu}$
for some $\nu \in \{1,\ldots ,v\}$}.
\end{equation}

\begin{thm}\label{max field}
Let $n=u_1^{a_1}u_2^{a_2}\cdots u_t^{a_t}$, where $u_i$'s are
pairwise distinct prime numbers and $a_i$'s are nonnegative
integers, and consider the extension $\F_{q^n}/\F_q$. For any
$\displaystyle{\vec{i}=(i_1,\ldots ,i_t)\in \prod_{1\leq j\leq t}
\{0,1,\ldots , a_j\}}$, the number of nonzero subspaces of
$\F_{q^n}$ that are maximally defined over $\F_{i_1,\ldots ,i_t}$ is
given by
\begin{equation}\label{incl-excl}
N(u_1^{a_1-i_1}\cdots u_t^{a_t-i_t},q^{u_1^{i_1}\cdots
u_t^{i_t}})-\sum_{1\leq j \leq t}N_{\vec{i}}(j)+\sum_{1\leq
j_1<j_2\leq t}N_{\vec{i}}(j_1,j_2)-\cdots
-(-1)^{t-1}N_{\vec{i}}(1,2,\ldots ,t).
\end{equation}
\end{thm}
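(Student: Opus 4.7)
The plan is to apply an inclusion--exclusion over the poset of intermediate fields of $\F_{q^n}/\F_q$ strictly containing $\F_{i_1,\ldots,i_t}$. I would start from the ambient count: since $\F_{q^n}$ is a vector space of dimension $u_1^{a_1-i_1}\cdots u_t^{a_t-i_t}$ over $\F_{i_1,\ldots,i_t}$, the total number of nonzero $\F_{i_1,\ldots,i_t}$-subspaces equals $N(u_1^{a_1-i_1}\cdots u_t^{a_t-i_t},\,q^{u_1^{i_1}\cdots u_t^{i_t}})$ by (\ref{binom}), which is the leading term of (\ref{incl-excl}). The remaining task is to subtract those $\F_{i_1,\ldots,i_t}$-subspaces whose scalar ring can be enlarged, i.e., those that are in fact defined over some strictly larger intermediate field.

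Because subfields of $\F_{q^n}$ are ordered by divisibility of their degrees, the intermediate fields strictly above $\F_{i_1,\ldots,i_t}$ form a product-of-chains lattice whose atoms are precisely the $\F_{i_1,\ldots,i_j+1,\ldots,i_t}$ for each $j$ with $i_j<a_j$. Every strict overfield contains at least one such atom, so an $\F_{i_1,\ldots,i_t}$-subspace is non-maximal iff it is defined over at least one atom. For a subset $J=\{j_1,\ldots,j_v\}\subseteq\{1,\ldots,t\}$ with $i_{j_\nu}<a_{j_\nu}$ for all $\nu$, a subspace is closed under scalar multiplication by every atom indexed by $J$ iff it is closed under multiplication by their compositum in $\F_{q^n}$ (the $\F_q$-algebra generated by the atoms, which is a subfield since $\F_{q^n}$ is a field). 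Taking the lcm of the atoms' degrees shows this compositum has degree $u_1^{i_1}\cdots u_{j_1}^{i_{j_1}+1}\cdots u_{j_v}^{i_{j_v}+1}\cdots u_t^{i_t}$ over $\F_q$, so the count of nonzero subspaces defined over it is exactly $N_{\vec{i}}(j_1,\ldots,j_v)$ in the notation of (\ref{intsection}).

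Standard inclusion--exclusion on the events ``defined over the $j$-th atom'' then yields
\[
\sum_{v=1}^{t}(-1)^{v-1}\sum_{1\leq j_1<\cdots<j_v\leq t}N_{\vec{i}}(j_1,\ldots,j_v)
\]
as the number of $\F_{i_1,\ldots,i_t}$-subspaces that are \emph{not} maximally defined over $\F_{i_1,\ldots,i_t}$; subtracting from the leading term gives exactly (\ref{incl-excl}). The convention (\ref{conv}) is what allows the inner sums to run over \emph{all} $v$-subsets of $\{1,\ldots,t\}$: whenever $i_{j_\nu}+1>a_{j_\nu}$ the corresponding atom does not exist, and setting $N_{\vec{i}}(j_1,\ldots,j_v)=0$ correctly removes those subsets from the tally. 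The only genuinely nontrivial step is the atom-to-compositum reduction (closure under each atom equals closure under their compositum, whose degree is the lcm of the atoms' degrees); once that is in hand, everything else is bookkeeping with $q$-binomial coefficients and the prime factorization of $n$.
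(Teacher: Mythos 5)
Your proof is correct and follows essentially the same route as the paper's: both start from the total count of nonzero $\F_{i_1,\ldots,i_t}$-subspaces, observe that any strictly larger intermediate field contains one of the atoms $\F_{i_1,\ldots,i_j+1,\ldots,i_t}$, identify the intersection of the events ``defined over the $j$-th atom'' with the subspaces defined over the compositum (giving $N_{\vec{i}}(j_1,\ldots,j_v)$), and conclude by inclusion--exclusion, with the convention (\ref{conv}) handling nonexistent atoms. No gaps.
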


\begin{proof}
The number of nonzero vector spaces in $\F_{q^n}$ defined over
$\F_{i_1,\ldots ,i_t}$ is $N(u_1^{a_1-i_1}\cdots
u_t^{a_t-i_t},q^{u_1^{i_1}\cdots u_t^{i_t}})$. Note that if a vector
space $V\subseteq \F_{q^n}$ is defined over an intermediate field
properly containing $\F_{i_1,\ldots ,i_t}$, then it also has a
vector space structure over $\F_{i_1,\ldots ,i_t}$. Therefore we
need to subtract the number of all such vector spaces from
$N(u_1^{a_1-i_1}\cdots u_t^{a_t-i_t},q^{u_1^{i_1}\cdots
u_t^{i_t}})$. An intermediate field properly containing
$\F_{i_1,\ldots ,i_t}$ has to contain at least one of the following
fields:
$$\F_{i_1+1,i_2, \ldots ,i_t}, \ \ \F_{i_1,i_2+1,i_3,\ldots ,i_t}, \ \ \F_{i_1,\ldots ,i_{t-1}, i_t+1}$$
For each $1\leq j \leq t$, set
$$S_j^{\vec{i}}:=\left\{V\subseteq \F_{q^n}: V\not=0, \ \mbox{$V$: vector space over $\F_{i_1,\ldots ,i_{j-1},i_{j}+1,i_{j+1},\ldots ,i_t}$} \right\}.$$
For any $1\leq j_1<j_2<\cdots <j_v\leq t$, note that
$S_{j_1}^{\vec{i}}\cap S_{j_2}^{\vec{i}}\cap \cdots \cap
S_{j_v}^{\vec{i}}$ consists of nonzero subspaces $V\subseteq
\F_{q^n}$ which are defined over the composite of the intermediate
fields
$$\F_{i_1,\ldots ,i_{j_1-1},i_{j_1}+1,i_{j_1+1},\ldots ,i_t}, \ \ldots \ ,\F_{i_1,\ldots ,i_{j_v-1},i_{j_v}+1,i_{j_v+1},\ldots ,i_t}.$$
Hence,
$$\left| S_{j_1}^{\vec{i}}\cap S_{j_2}^{\vec{i}}\cap \cdots \cap S_{j_v}^{\vec{i}} \right|=N_{\vec{i}}(j_1,\ldots ,j_v).$$
Observe that $S_{j_1}^{\vec{i}}\cap S_{j_2}^{\vec{i}}\cap \cdots
\cap S_{j_v}^{\vec{i}}=\emptyset$ if $i_{j_\nu}+1>a_{j_\nu}$ for
some $\nu \in \{1,\ldots ,v\}$ (cf. (\ref{conv})). Since the number
of nonzero subspaces in $\F_{q^n}$ that are defined maximally over
the subfield $\F_{i_1,\ldots ,i_t}$ is
$$N(u_1^{a_1-i_1}\cdots u_t^{a_t-i_t},q^{u_1^{i_1}\cdots u_t^{i_t}})-\left| S_1^{\vec{i}}\cup S_2^{\vec{i}}\cup \cdots \cup S_t^{\vec{i}} \right|,$$
and the inclusion-exclusion principle states that
$$\left| S_1^{\vec{i}}\cup S_2^{\vec{i}}\cup \cdots \cup S_t^{\vec{i}} \right|=\sum_{1\leq j \leq t}|S_j|-\sum_{1\leq j_1<j_2\leq t}|S_{j_1}\cap S_{j_2}|+\cdots +(-1)^{t-1}| S_1^{\vec{i}}\cap S_2^{\vec{i}}\cap \cdots \cap S_t^{\vec{i}}|,$$
the result follows.
\end{proof}

\section{QC Subcodes of the Simplex Code}\label{simplex}
The $q$-ary simplex code of length $N=q^n-1$ is defined as
$$C=\left\{\left(\Tr_{\F_{q^n}/\F_q}\left(\lambda\alpha^{k}\right)\right)_{0\leq k \leq N-1} ;
\lambda\in \F_{q^n} \right\}.$$ We will let $n$ be as general as
possible: $n=u_1^{a_1}u_2^{a_2}\cdots u_t^{a_t}$, where $u_i$'s are
pairwise distinct prime numbers and $a_i$'s are nonnegative
integers. For $\displaystyle{\vec{i}=(i_1,\ldots ,i_t)\in
\prod_{1\leq j\leq t} \{0,1,\ldots , a_j\}}$, let
\begin{equation}\label{simplex indices}
L_{\vec{i}}=L_{i_1,\ldots,i_t}:=\frac{q^n-1}{q^{u_1^{i_1}\cdots
u_t^{i_t}}-1}.
\end{equation}
By Theorem \ref{index theorem}, the set of indices of QC subcodes in
$C$ is
$$I=\left\{ L_{i_1,\ldots,i_t}: \ 0\leq i_j \leq a_j \ \mbox{for all $j$} \right\},$$
when $q\not=2$. For $q=2$, one has to exclude $L_{0,\ldots
,0}=2^n-1$ from the set above.

Let us denote the intermediate field $\F_{q^{u_1^{i_1}\cdots
u_t^{i_t}}}$ of $\F_{q^n}/\F_q$ by $\F_{i_1,\ldots ,i_t}$. By
Theorem \ref{enum-main}, the number of QC subcodes of index
$L_{i_1,\ldots,i_t}$ in the simplex code is equal to the number of
nonzero subspaces in $\F_{q^n}$ that are defined maximally over the
subfield $\F_{i_1,\ldots ,i_t}$. Using Theorem \ref{max field}, we
obtain the following.

\begin{thm}\label{simplex count}
Let $C$ be the $q$-ary simplex code of length $N=q^n-1$, where
$n=u_1^{a_1}u_2^{a_2}\cdots u_t^{a_t}$ for pairwise distinct primes
$u_i$ and nonnegative integers $a_i$. With the notations in
(\ref{simplex indices}), (\ref{intsection}) and (\ref{conv}), for
any $\displaystyle{\vec{i}=(i_1,\ldots ,i_t)\in \prod_{1\leq j\leq
t} \{0,1,\ldots , a_j\}}$, $C$ has
$$N(u_1^{a_1-i_1}\cdots u_t^{a_t-i_t},q^{u_1^{i_1}\cdots u_t^{i_t}})-\sum_{1\leq j \leq t}N_{\vec{i}}(j)+\sum_{1\leq j_1<j_2\leq t}N_{\vec{i}}(j_1,j_2)-\cdots -(-1)^{t-1}N_{\vec{i}}(1,2,\ldots ,t)$$
QC subcodes of index $L_{\vec{i}}$. For $q=2$, exclude
$\vec{i}=(0,\ldots ,0)$ from this conclusion.
\end{thm}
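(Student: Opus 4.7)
The plan is to obtain Theorem \ref{simplex count} as an immediate specialization of the three prior results: Theorem \ref{index theorem} (indices of QC subcodes), Theorem \ref{enum-main} (distinct subspace tuples yield distinct subcodes), and Theorem \ref{max field} (counting subspaces maximally defined over a subfield). The key observation is that for the simplex code one has $s=1$ and $i_1=1$ in the trace representation, so every $\F_q$-linear subcode $C'$ is determined by a single nonzero $\F_q$-subspace $V\subseteq \F_{q^n}$. The standing hypothesis that each $i_j$ has full-size $q$-cyclotomic coset mod $N$ is automatic here since $\F_q(\alpha)=\F_{q^n}$, which validates use of the preceding theorems.

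First I would apply Theorem \ref{enum-main} to conclude that the correspondence $V\mapsto C_V$ is injective; hence counting QC subcodes of a fixed index reduces to counting subspaces $V$ producing that index. Next I would invoke Theorem \ref{index theorem}(i) specialized to $i_1=1$: for each intermediate field $\F_{i_1,\ldots,i_t}$ one has $\ell_1=\lcm(1,L_{\vec{i}})/1=L_{\vec{i}}$, and since $s=1$ the index of $C_V$ equals $\ell_1$. Therefore the index of $C_V$ is exactly $L_{\vec{i}}$, where $\F_{i_1,\ldots,i_t}$ is the maximal intermediate field over which $V$ has vector space structure. Combining these two observations, the number of QC subcodes of index $L_{\vec{i}}$ coincides with the number of nonzero subspaces of $\F_{q^n}$ maximally defined over $\F_{i_1,\ldots,i_t}$, which is precisely the inclusion-exclusion expression supplied by Theorem \ref{max field}.

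The only step needing care is the exclusion clause for $q=2$. For $\vec{i}=(0,\ldots,0)$ one computes $L_{\vec{i}}=(q^n-1)/(q-1)$, and this equals $q^n-1=N$ precisely when $q=2$. An index equal to the length does not describe a genuine QC code, and Theorem \ref{index theorem} explicitly removes $q^n-1$ from the admissible index set $I$; so these subspaces must be dropped from the enumeration when $q=2$. For $q\geq 3$ the strict inequality $(q^n-1)/(q-1)<q^n-1$ holds and every $\vec{i}$ contributes. This edge case is essentially the only obstacle; once it is handled, the theorem falls out by stitching together the three preceding results.
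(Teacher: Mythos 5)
Your proposal is correct and follows essentially the same route as the paper: the paper likewise derives Theorem \ref{simplex count} by specializing Theorem \ref{index theorem} to $s=1$, $i_1=1$ (so the index is $L_{\vec{i}}$ for the maximal field of definition of $V$), invoking Theorem \ref{enum-main} for the bijection between subspaces and subcodes, and applying the inclusion-exclusion count of Theorem \ref{max field}, with the same $q=2$ exclusion of $\vec{i}=(0,\ldots,0)$ since $L_{0,\ldots,0}=2^n-1=N$.
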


\section{QC Subcodes of Duals of BCH Codes}\label{bch}
\subsection{Dual of the binary double-error-correcting BCH code}
Dual of the binary double-error-correcting BCH code of length
$N=2^n-1$ is defined as
$$C=\left\{\left(\Tr_{\F_{2^n}/\F_2}\left(\lambda\alpha^{k} + \beta \alpha^{3k} \right)\right)_{0\leq k \leq N-1} ;
\lambda , \beta \in \F_{2^n} \right\}.$$ With the notation of
Theorem \ref{index theorem}, $i_1=1$ and $i_2=3$. Note that
$|Cyc_q(3)|=n$ for any $q$ and $n$ except for $q=2$ and $n=2$. We
will investigate this code for two families of $n$ values.

\subsubsection{\textbf{$n=u^{a-1}$: power of a prime}}\label{power of prime}
In order to have full cyclotomic coset for $3$, we will exclude the
case $u=2=a$, in which case $C$ is a rather short and uninteresting
code.

Intermediate fields of the extension $\F_{2^{u^{a-1}}}/\F_2$ are
\begin{equation}\label{intbch}
\F_2\subset \F_{2^u} \subset \F_{2^{u^2}} \subset \F_{2^{u^3}}
\subset \cdots \subset \F_{2^{u^{a-1}}}.
\end{equation}
We have
$$L_1=2^{u^{a-1}}-1=N, \ L_2=\frac{2^{u^{a-1}}-1}{2^u-1}, \ L_3=\frac{2^{u^{a-1}}-1}{2^{u^2}-1}, \ldots , L_a=1 \ (\mbox{cf. Theorem \ref{index theorem}}).$$
It is clear that $\ell_1^i=L_i$ for all $i=1,\ldots ,a$. Now we need
to compute $\ell_2^i$'s. Since $2^2\equiv 1$ (mod $3$), we have
$$
2^{u^{a-1}}-1 \equiv \left\{\begin{array}{ll} 1 \ \mbox{mod} \ 3, &
\mbox{if $u$ is odd,}\\ 0 \ \mbox{mod} \ 3, & \mbox{if $u=2$}.
\end{array}\right.
$$
Note that $L_i\mid L_1$ for all $i$. Hence for $u$ an odd prime, we
have $3\nmid L_i$ for all $i$ (since $3\nmid L_1$ in this case) and
$$\ell_2^i=\frac{\lcm(3,L_i)}{3}=L_i.$$
For $u=2$ however, $3\mid L_1$. Therefore $\ell_2^1=L_1/3=L_2$ in
this case. For $i>1$, we have
\begin{eqnarray*}
L_i&=&\frac{L_1}{2^{2^{i-1}}-1}\\
&=&1+2^{2^{i-1}}+2^{2\cdot 2^{i-1}}+2^{3\cdot 2^{i-1}}+\cdots +2^{(2^{a-i}-1)\cdot 2^{i-1}}\\
&\equiv&2^{a-i} \ \mbox{mod } 3 \\
&\not\equiv& 0 \ \mbox{mod } 3
\end{eqnarray*}
whether $a-i$ is odd or even. Hence, $3\nmid L_i$ for $i>1$ when
$u=2$. Therefore for $u=2$,
$$\ell_2^i=L_i \ \mbox{for $i=2,\ldots ,a$}.$$
Our conclusions are summarized in Table 1.

\begin{table}[h]
\begin{tabular}{|c|c|}
\multicolumn{2}{c}{$u>2$}     \\
  \hline
  $\ell_1^1=L_1$ & $\ell_2^1=L_1$ \\
  \hline
  $\ell_1^2=L_2$ & $\ell_2^2=L_2$ \\
  \hline
    $\ell_1^3=L_3$ & $\ell_2^3=L_3$ \\
  \hline
  $\vdots$ & $\vdots$ \\
  \hline
    $\ell_1^a=L_a$ & $\ell_2^a=L_a$ \\
  \hline
\end{tabular}
\quad \quad \quad \quad
\begin{tabular}{|c|c|}
\multicolumn{2}{c}{$u=2$}     \\
  \hline
  $\ell_1^1=L_1$ & $\ell_2^1=L_2$ \\
  \hline
  $\ell_1^2=L_2$ & $\ell_2^2=L_2$ \\
  \hline
    $\ell_1^3=L_3$ & $\ell_2^3=L_3$ \\
  \hline
  $\vdots$ & $\vdots$ \\
  \hline
    $\ell_1^a=L_a$ & $\ell_2^a=L_a$ \\
  \hline
\end{tabular}
\caption{$\ell_1^i$ and $\ell_2^i$ values for the dual of the binary
BCH code for $n=u^{a-1}$}
\end{table}
The following result describes all possible indices for QC subcodes
of the dual of the binary double-error-correcting BCH code when
$n=u^{a-1}$.

\begin{prop}\label{bchindices}
For the dual of the binary double-error-correcting BCH code $C$ of
length $N=2^{u^{a-1}}-1$, indices of QC subcodes are
$$I=\{L_a=1, L_{a-1},\ldots ,L_2\}.$$
\end{prop}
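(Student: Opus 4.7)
The plan is to plug Table~1 directly into part~(ii) of Theorem~\ref{index theorem}, which already tells us that $I$ consists of all least common multiples $\lcm(\ell_{t_1}^{a_1},\ldots,\ell_{t_m}^{a_m})$ with distinct $t_j \in \{1,2\}$ (since $s=2$ here) and $a_j \in \{1,\ldots,a\}$, minus the value $N = 2^{u^{a-1}}-1$. The observation driving everything is that in both columns of Table~1 (for $u>2$ and $u=2$) every entry $\ell^i_j$ is one of the numbers $L_1,L_2,\ldots,L_a$, so every $\lcm$ forming an element of $I$ is automatically an $\lcm$ of some subset of $\{L_1,\ldots,L_a\}$.

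Next I would verify the divisibility chain
\[
L_a \mid L_{a-1} \mid \cdots \mid L_1.
\]
This is immediate from $L_i = (2^{u^{a-1}}-1)/(2^{u^{i-1}}-1)$ together with the standard fact $2^d-1 \mid 2^e-1 \iff d\mid e$: for $i\le j$ we have $u^{i-1}\mid u^{j-1}$, hence $2^{u^{i-1}}-1 \mid 2^{u^{j-1}}-1$, so $L_j\mid L_i$. Consequently $\lcm(L_{r_1},\ldots,L_{r_m}) = L_{\min(r_1,\ldots,r_m)}$, and so every element of $I$ equals some $L_r$ with $1\le r\le a$. Removing the forbidden value $L_1 = N$ gives $I\subseteq \{L_2,\ldots,L_a\}$.

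For the reverse inclusion one simply exhibits each $L_i$ with $2\le i\le a$ as an achievable index: taking $m=1$, $t_1=1$, $a_1=i$ gives the single value $\ell_1^i = L_i$ (by Table~1, valid in both the $u>2$ and $u=2$ columns), so $L_i\in I$. Collecting these two inclusions yields $I=\{L_a=1,L_{a-1},\ldots,L_2\}$.

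The only part requiring some caution is the $u=2$ column of Table~1, where $\ell_2^1 = L_2$ rather than $L_1$. This asymmetry is harmless for the conclusion: $L_1$ is still realized via $\ell_1^1$, but is then discarded because $L_1=N$; and all other $L_r$-values arise identically in both columns. Thus I do not expect any real obstacle — the proposition is essentially a bookkeeping consequence of Theorem~\ref{index theorem}(ii) once Table~1 and the tower $L_a\mid\cdots\mid L_1$ are in hand.
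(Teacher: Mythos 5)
Your proposal is correct and follows essentially the same route as the paper: reduce to Theorem \ref{index theorem}(ii) via Table 1, observe that the $L_i$ form a divisibility chain so that every least common multiple collapses to some $L_r$, and discard $L_1=N$. The only difference is cosmetic: the paper establishes $L_j\mid L_i$ for $i<j$ through the cyclotomic factorization $L_i=\phi_{u^i}(2)\cdots\phi_{u^{a-1}}(2)$, whereas you use the elementary fact that $2^{u^{i-1}}-1\mid 2^{u^{j-1}}-1$ forces the complementary quotients to divide in the reverse order; both arguments are valid.
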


\begin{proof}
By Theorem \ref{index theorem}, $I$ consists of $\ell_1^i$'s,
$\ell_2^j$'s and other values that $\lcm(\ell_1^i,\ell_2^j)$'s can
produce. Of course, $L_1=N$ is excluded from $I$. By Table 1,
$L_2,L_3,\ldots ,L_a$ are all contained in $I$. We need to check the
outcomes of $\lcm(\ell_1^i,\ell_2^j)$'s.

It is well-known that the polynomial $x^{u^{t-1}}-1$ (for any $t$)
factors over $\mathbb{Q}$ into cyclotomic polynomials:
$$x^{u^{t-1}}-1=(x-1)\phi_u(x)\phi_{u^2}(x)\phi_{u^3}(x)\cdots \phi_{u^{t-1}}(x).$$
Therefore for any $1\leq i \leq a$, we have
$$L_i=\frac{2^{u^{a-1}}-1}{2^{u^{i-1}}-1}=\frac{\phi_u(2)\phi_{u^2}(2)\cdots \phi_{u^{a-1}}(2)}{\phi_u(2)\phi_{u^2}(2)\cdots \phi_{u^{i-1}}(2)}=\phi_{u^i}(2)\phi_{u^{i+1}}(2)\cdots \phi_{u^{a-1}}(2).$$
Hence for $i<j$, we have $L_j\mid L_i$ and $\lcm(L_i,L_j)=L_i$.
Using Table 1, $\lcm(\ell_1^i,\ell_2^j)=\lcm(L_i,L_j)$'s do not
bring any new value to $I$ for both odd and even prime values of
$u$.
\end{proof}

We are ready to count QC subcodes.

\begin{thm}\label{bch count 1}
Consider the dual of the binary double-error-correcting BCH  code
$C$ of length $N=2^{u^{a-1}}-1$ for a prime number $u$ and an
integer $a\geq 2$ except for $u=2=a$. Let
$$A_{j}=N(u^{a-j-1},2^{u^{j}})-N(u^{a-j-2},2^{u^{j+1}}) \ \ \mbox{for $0\leq j \leq a-1$.}$$

(i) If $u$ is odd, then $C$ has 3 cyclic subcodes (including
itself), $$3A_{a-2}+A_{a-2}N(u,2^{u^{a-2}})$$ QC subcodes of index
$L_{a-1}$, and
\begin{equation}\label{eq1}
2A_{j-1}+A_{j-1}(N(u^{a-j},2^{u^{j-1}})+N(u^{a-j-1},2^{u^{j}}))
\end{equation}
QC subcodes of index $L_j$ for any $2\leq j \leq a-2$.

(ii) If $u=2$, then $C$ has 3 cyclic subcodes (including itself) for
any $a$. For $a\geq 4$, there are
$$3A_{a-2}+A_{a-2}N(u,2^{u^{a-2}})$$ QC subcodes of index $L_{a-1}$,
\begin{equation}\label{eq2}
2A_{j-1}+A_{j-1}(N(u^{a-j},2^{u^{j-1}})+N(u^{a-j-1},2^{u^{j}}))
\end{equation}
QC subcodes of index $L_j$ for $3\leq j \leq a-2$, and
\begin{equation}\label{eq3}
2A_{1}+A_0+A_{1}N(u^{a-1},2)+(A_0+A_1)N(u^{a-3},2^{u^{2}})
\end{equation}
QC subcodes of index $L_2$. If $a=3$, there are
\begin{equation}\label{eq4}
3A_1+2A_0+A_1N(u^2,2)
\end{equation}
QC subcodes of index $L_2$.
\end{thm}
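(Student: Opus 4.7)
My plan is to combine Theorem~\ref{enum-main}, Proposition~\ref{ff0}, and the data already recorded in Table~1. By our hypothesis (which excludes $u=a=2$), both $i_1=1$ and $i_2=3$ have full-size $2$-cyclotomic cosets mod $N$, so Theorem~\ref{enum-main} puts subcodes of $C$ in bijection with ordered pairs $(V_1,V_2)$ of $\F_2$-subspaces of $\F_{2^n}$; the trivial subspace $V_j=\{0\}$ is allowed and, as in the proof of Theorem~\ref{index theorem}(i), is treated as ``maximally over $\F_{2^n}$'' (contributing $\ell_j=1$). Since the subfield lattice of $\F_{2^{u^{a-1}}}/\F_2$ is the single chain (\ref{intbch}), Theorem~\ref{max field} collapses to the bare difference
\[
M_s := N(u^{a-s},2^{u^{s-1}}) - N(u^{a-s-1},2^{u^s}) = A_{s-1}
\]
for the count of non-zero subspaces of $\F_{2^n}$ maximally over $\F_{2^{u^{s-1}}}$, with $M_a=1$. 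Throughout I will use the identity $\sum_{r=s}^{a} M_r = N(u^{a-s},2^{u^{s-1}})$.

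For $u$ odd, chain-divisibility $L_a\mid L_{a-1}\mid\cdots\mid L_1$ (cf.\ Proposition~\ref{bchindices}) together with Table~1 gives $\lcm(\ell_1^s,\ell_2^t)=L_{\min(s,t)}$. Cyclic subcodes correspond to $s=t=a$, i.e.\ $V_1,V_2\in\{0,\F_{2^n}\}$; removing the degenerate pair $(0,0)$ leaves $3$ cyclic subcodes, including $C$ itself. For $2\le j\le a-1$, I partition pairs with $\min(s,t)=j$ into item~(a) ``$s=j$, any $V_2$ with effective class $\ge j$ (or $V_2=0$)'' and item~(b) ``$t=j$, $V_1$ strictly finer (i.e.\ $s>j$ or $V_1=0$)''. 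Item~(a) contributes $A_{j-1}\bigl(N(u^{a-j},2^{u^{j-1}})+1\bigr)$ and item~(b) contributes $A_{j-1}\bigl(N(u^{a-j-1},2^{u^j})+1\bigr)$, the $+1$'s accounting for $V_2=0$ and $V_1=0$. Summing yields (\ref{eq1}); the $L_{a-1}$ formula is the specialization via $N(1,2^{u^{a-1}})=1$.

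For $u=2$, Table~1 differs only in that $\ell_2^1=L_2$; hence a non-zero $V_2$ maximally over $\F_2$ contributes to index $L_2$ rather than $L_1$. This twist has no effect for indices $L_j$ with $j\ge 3$, so the previous argument reproduces (\ref{eq2}) and the $L_{a-1}$ formula (when $a\ge 4$). For $j=2$, I replace the $V_2$-class of size $M_2=A_1$ in item~(a) by the enlarged class $M_1+M_2=A_0+A_1$ of non-zero subspaces that fail to be $\F_{16}$-subspaces, while in item~(b) the $\F_{16}$-subspace count $N(u^{a-3},2^{u^2})$ is unchanged. This produces
\[
A_1\bigl(N(u^{a-1},2)+1\bigr)+(A_0+A_1)\bigl(N(u^{a-3},2^{u^2})+1\bigr),
\]
which expands to (\ref{eq3}). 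For $a=3$, the reduction $N(1,16)=1$ collapses (\ref{eq3}) to (\ref{eq4}).

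The main obstacle is careful bookkeeping: the zero subspace must be tracked separately (as a ``maximally over $\F_{2^n}$'' symbol that is excluded only in the cyclic count), and the $u=2$ twist at the boundary $j=2$ forces a re-partition of the $V_2$-classes throughout the counting decomposition. Once those conventions are fixed, everything reduces to the identity rewriting partial sums of $M_r$ as single $N(\cdot,\cdot)$ terms, which produces the compact formulas in the statement.
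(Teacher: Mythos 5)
Your proof is correct and follows essentially the same route as the paper's: reduce to counting pairs $(V,W)$ of subspaces via Theorem \ref{enum-main}, classify each subspace by its maximal field of definition (with $A_{j-1}$ counting each class along the chain (\ref{intbch})), and use the chain divisibility $L_j\mid L_i$ for $i<j$ to decide which pairs produce each index. Your partition by which coordinate attains the minimal class, together with the telescoping identity $\sum_{r\ge s}M_r=N(u^{a-s},2^{u^{s-1}})$, is just a tidier bookkeeping of the paper's explicit case lists, and all resulting formulas --- including the $u=2$, $j=2$ boundary case and the $a=3$ specialization --- agree.
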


\begin{proof}
A QC subcode of $C$ is of the form
\begin{equation}\label{QCsubcode}C'=\left\{\left(\Tr_{\F_{2^{u^{a-1}}}/\F_2}\left(\lambda\alpha^{k} + \beta \alpha^{3k} \right)\right)_{0\leq k \leq N-1} ;
\lambda \in V , \beta \in W \right\},\end{equation} where
$V,W\subseteq \F_{2^{u^{a-1}}}$ are subspaces defined over some
intermediate field of the extension $\F_{2^{u^{a-1}}}/\F_2$. Note
that this extension's subfield structure is rather simple (cf.
(\ref{intbch})). Hence it is easy to see that $A_{j}$ describes the
number of nonzero subspaces in $\F_{2^{u^{a-1}}}$ defined maximally
over $\F_{2^{u^{j}}}$ for $0\leq j \leq a-1$ (cf. (\ref{incl-excl})
and note that $A_{a-1}=1$ by the convention in (\ref{conv})).

The following observation in the proof of Proposition
\ref{bchindices} will be the main tool for our analysis below:
$$\mbox{If $i<j$, then $\lcm(L_i,L_j)=L_i$.}$$
Note that in any case, subcodes of index $L_a=1$ (i.e. cyclic
subcodes) are obtained by letting $(V=\F_{2^{u^{a-1}}}, W=0),
(V=0,W=\F_{2^{u^{a-1}}})$ or
$(V=\F_{2^{u^{a-1}}},W=\F_{2^{u^{a-1}}})$ (the code $C$ itself).
Moreover when $a=2$, we have an extension $\F_{2^u}/\F_2$ and $C$
only has cyclic subcodes and there are 3 cyclic subcodes as noted
above. Hence, we will assume that $a\geq 3$ below.

(i) Choices of $V$ and $W$ that yield QC subcodes of index $L_{a-1}$
can be systematically listed as follows:
\[\begin{array}{l}
\mbox{$V$: maximally defined over $\F_{2^{u^{a-2}}}$}, \ W=0\\
V=0, \mbox{$W$: maximally defined over $\F_{2^{u^{a-2}}}$}\\
\mbox{$V$: maximally defined over $\F_{2^{u^{a-2}}}$}, \mbox{$W$: defined over $\F_{2^{u^{a-2}}}$}\\
\mbox{$V$: maximally defined over $\F_{2^{u^{a-1}}}$}, \mbox{$W$:
maximally defined over $\F_{2^{u^{a-2}}}$}
\end{array}\]
The number of such choices are, respectively,
$$A_{a-2}, \ A_{a-2}, \ A_{a-2}N(u,2^{u^{a-2}}), \ A_{a-1} A_{a-2},$$
whose sum is $3A_{a-2}+A_{a-2}N(u,2^{u^{a-2}})$. Note that for
$a=3$, there are only QC subcodes of index $L_{a-1}=L_2$ (other than
cyclic subcodes) whose count is given above and (\ref{eq1}) does not
apply to this situation. For $a\geq 4$ and $2\leq j \leq a-2$,
choices of $V$ and $W$ that yield QC subcodes of index $L_j$ are
\[\begin{array}{l}
\mbox{$V$: maximally defined over $\F_{2^{u^{j-1}}}$}, \ W=0\\
V=0, \mbox{$W$: maximally defined over $\F_{2^{u^{j-1}}}$}\\
\mbox{$V$: maximally defined over $\F_{2^{u^{j-1}}}$}, \mbox{$W$: defined over $\F_{2^{u^{j-1}}}$}\\
\mbox{$V$: maximally defined over $\F_{2^{u^{j}}}$}, \mbox{$W$: maximally defined over $\F_{2^{u^{j-1}}}$}\\
\mbox{$V$: maximally defined over $\F_{2^{u^{j+1}}}$}, \mbox{$W$: maximally defined over $\F_{2^{u^{j-1}}}$}\\
\hspace{5cm} \vdots \\
\mbox{$V$: maximally defined over $\F_{2^{u^{a-1}}}$}, \mbox{$W$: maximally defined over $\F_{2^{u^{j-1}}}$}\\
\end{array}\]
The number of choices for the first three combinations are,
respectively,
$$A_{j-1}, \ A_{j-1}, \ A_{j-1}N(u^{a-j},2^{u^{j-1}}).$$
The remaining choices add up to $N(u^{a-j-1},2^{u^{j}}) A_{j-1}$.
The total is
$$2A_{j-1}+A_{j-1}(N(u^{a-j},2^{u^{j-1}})+N(u^{a-j-1},2^{u^{j}})).$$

(ii) For $a\geq 5$, the count of $L_j$ for $3\leq j \leq a$ follows
as in part (i) and the results for index $L_{a-1}$ subcodes,
(\ref{eq2}) and (\ref{eq3}) are identical. For $a=4$, counting goes
similarly again but note that (\ref{eq2}) does not apply to this
situation. So, what needs special attention here is the number of
index $L_2$ QC subcodes for $a\geq 4$ and $a=3$. For $a\geq 4$,
choices of $V$ and $W$ that yield QC subcode of index $L_2$ are as
follows:
\[\begin{array}{l}
\mbox{$V$: maximally defined over $\F_{2^{u}}$}, \ W=0\\
V=0, \mbox{$W$: maximally defined over $\F_{2}$}\\
V=0, \mbox{$W$: maximally defined over $\F_{2^{u}}$}\\
\mbox{$V$: maximally defined over $\F_{2^{u}}$}, \mbox{$W$: defined over $\F_{2}$}\\
\mbox{$V$: maximally defined over $\F_{2^{u^{2}}}$}, \mbox{$W$: maximally defined over $\F_{2}$}\\
\mbox{$V$: maximally defined over $\F_{2^{u^{2}}}$}, \mbox{$W$: maximally defined over $\F_{2^u}$}\\
\mbox{$V$: maximally defined over $\F_{2^{u^{3}}}$}, \mbox{$W$: maximally defined over $\F_{2}$}\\
\mbox{$V$: maximally defined over $\F_{2^{u^{3}}}$}, \mbox{$W$: maximally defined over $\F_{2^u}$}\\
\hspace{5cm} \vdots \\
\mbox{$V$: maximally defined over $\F_{2^{u^{a-1}}}$}, \mbox{$W$: maximally defined over $\F_{2}$}\\
\mbox{$V$: maximally defined over $\F_{2^{u^{a-1}}}$}, \mbox{$W$: maximally defined over $\F_{2^u}$}\\
\end{array}\]
The number of choices for the first four combinations are,
respectively,
$$A_{1}, \ A_{0}, \ A_1, \ A_{1}N(u^{a-1},2).$$
The remaining choices add up to $(A_0+A_1)N(u^{a-3},2^{u^{2}})$.
Total is the desired value.

When $a=3$, choices of $V$ and $W$ that yield QC subcode of index
$L_2$ are as follows:
\[\begin{array}{l}
\mbox{$V$: maximally defined over $\F_{2^{u}}$}, \ W=0\\
V=0, \mbox{$W$: maximally defined over $\F_{2}$}\\
V=0, \mbox{$W$: maximally defined over $\F_{2^{u}}$}\\
\mbox{$V$: maximally defined over $\F_{2^{u}}$}, \mbox{$W$: defined over $\F_{2}$}\\
\mbox{$V$: maximally defined over $\F_{2^{u^{2}}}$}, \mbox{$W$: maximally defined over $\F_{2}$}\\
\mbox{$V$: maximally defined over $\F_{2^{u^{2}}}$}, \mbox{$W$: maximally defined over $\F_{2^u}$}\\
\end{array}\]
The number of choices for the first four combinations are,
respectively,
$$A_{1}, \ A_{0}, \ A_1, \ A_{1}N(u^2,2).$$
The remaining two choices add up to $(A_0+A_1)$. Hence the total is
$3A_1+2A_0+A_1N(u^2,2)$ and the proof is finished.
\end{proof}

\subsubsection{\textbf{$n=uv$: product of two distinct primes}} \label{distinct primes}
We consider the dual of the binary double-error-correcting BCH code
of length $N=2^{uv}-1$, where $u$ and $v$ are distinct primes.

We have
\[\begin{array}{cc}
L_1=2^{uv}-1 & L_2=\dfrac{2^{uv}-1}{2^u-1}=1+2^u+2^{2u}+\cdots +2^{(v-1)u} \\
L_3=\dfrac{2^{uv}-1}{2^v-1}=1+2^v+2^{2v}+\cdots +2^{(u-1)v} & L_4=1.
\end{array}\]
Depending on whether the product $uv$ is odd or even, we can compute
the values of $\ell_1^i$ and $\ell_2^j$ as in Section \ref{power of
prime}. The results are presented in Table 2.

\begin{table}[H]
\begin{tabular}{|c|c|}
\multicolumn{2}{c}{$u,v$: odd}     \\
  \hline
  $\ell_1^1=L_1$ & $\ell_2^1=L_1$ \\
  \hline
  $\ell_1^2=L_2$ & $\ell_2^2=L_2$ \\
  \hline
    $\ell_1^3=L_3$ & $\ell_2^3=L_3$ \\
  \hline
    $\ell_1^4=L_4$ & $\ell_2^4=L_4$ \\
  \hline
\end{tabular}
\quad \quad \quad \quad
\begin{tabular}{|c|c|}
\multicolumn{2}{c}{$u=2$, $v$: odd}     \\
  \hline
  $\ell_1^1=L_1$ & $\ell_2^1=L_1/3$ \\
  \hline
  $\ell_1^2=L_2$ & $\begin{array}{c} \ell_2^2=L_2/3 \ (v=3) \\ \ell_2^2=L_2 \ (v\not=3)\end{array}$ \\
  \hline
    $\ell_1^3=L_3$ & $\ell_2^3=L_3/3$ \\
  \hline
    $\ell_1^4=L_4$ & $\ell_2^4=L_4$ \\
  \hline
\end{tabular}
\caption{$\ell_1^i$ and $\ell_2^i$ values for the dual of the BCH
code for $n={uv}$}
\end{table}

The following result describes all possible indices for QC subcodes
in the case $n=uv$.

\begin{prop}\label{bchindices2}
For the dual of the binary double-error-correcting BCH code $C$ of
length $N=2^{uv}-1$, indices of QC subcodes are follows:
$$I=\left\{\begin{array}{ll}\{L_2,L_3,L_4\}& \mbox{if $u,v$ are both odd}\\ \{L_2,L_3,L_3/3,L_4\}& \mbox{if $u=2$, $v\not=3$: odd}\\ \{1,3,7,9,21\}& \mbox{if $u=2$, $v=3$}\\\end{array}\right.$$
\end{prop}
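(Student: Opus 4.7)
The plan is to apply Theorem~\ref{index theorem}(ii) with $s=2$ to the trace representation of this code. The extension $\F_{2^{uv}}/\F_2$ has exactly four intermediate fields $\F_2, \F_{2^u}, \F_{2^v}, \F_{2^{uv}}$, and all values of $\ell_1^i$ and $\ell_2^j$ are already tabulated in Table~2. By that theorem, every element of $I$ has the form $\ell_1^a$, $\ell_2^b$, or $\lcm(\ell_1^a,\ell_2^b)$ for some $a,b\in\{1,2,3,4\}$, with $N=L_1$ removed. Since $\ell_1^4=\ell_2^4=1$, the singletons are recovered as $\lcm(\ell_1^a,\ell_2^4)$ and $\lcm(\ell_1^4,\ell_2^b)$, so it suffices to enumerate the sixteen pairwise LCMs.

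The crucial arithmetic input driving all three cases is
\[
\lcm(L_2,L_3)=L_1.
\]
If $k\cdot\lcm(L_2,L_3)=L_1$, then $k$ divides both $L_1/L_2=2^u-1$ and $L_1/L_3=2^v-1$, so $k\mid\gcd(2^u-1,2^v-1)=2^{\gcd(u,v)}-1=1$ since $u,v$ are distinct primes. Together with the trivial divisibilities $L_4\mid L_i$ for all $i$ and $L_2,L_3\mid L_1$, this pins down every $\lcm(L_i,L_j)$.

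For case (i), Table~2 gives $\ell_1^i=\ell_2^i=L_i$, so every pairwise LCM lies in $\{L_1,L_2,L_3,L_4\}$; removing $L_1=N$ yields $I=\{L_2,L_3,L_4\}$. For case (ii) I would use two additional identities: $L_1/3=(2^{2v}-1)/3=L_2$ (so $\ell_2^1=L_2$), and $L_2=(2^v-1)(2^v+1)/3=(2^v-1)(L_3/3)$, which forces $L_3/3\mid L_2$. Substituting these into Table~2 collapses the sixteen LCMs into $\{L_1,L_2,L_3,L_3/3,L_4\}$; removing $L_1$ gives the claimed $I=\{L_2,L_3,L_3/3,L_4\}$. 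For case (iii) the values are explicit: $L_1,L_2,L_3,L_4=63,21,9,1$ and $\ell_2^1,\ell_2^2,\ell_2^3,\ell_2^4=21,7,3,1$; a direct check of the sixteen LCMs produces $\{1,3,7,9,21,63\}$, and removing $63=N$ gives $I=\{1,3,7,9,21\}$.

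The main obstacle is the bookkeeping for case (ii): recognizing the non-obvious coincidence $\ell_2^1=L_1/3=L_2$ (which is special to $u=2$) and verifying the divisibility $L_3/3\mid L_2$, so that the LCM $\lcm(L_2,L_3/3)$ contributes no new index. Once these small identities are established and $\lcm(L_2,L_3)=L_1$ is in hand, the proof is a routine case analysis from Table~2.
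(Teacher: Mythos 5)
Your proposal is correct and follows essentially the same route as the paper: apply Theorem \ref{index theorem}(ii) to the values in Table 2, establish $\lcm(L_2,L_3)=L_1$ and (for $u=2$) the identities $L_1/3=L_2$ and $L_3/3\mid L_2$, then check that no pairwise LCM produces a new index. The only cosmetic difference is that you derive $\lcm(L_2,L_3)=L_1$ from $\gcd(2^u-1,2^v-1)=2^{\gcd(u,v)}-1=1$, whereas the paper uses the cyclotomic factorization $2^{uv}-1=\phi_u(2)\phi_v(2)\phi_{uv}(2)$; both are valid and equivalent in substance.
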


\begin{proof}
We use the $\ell_1^i$ and $\ell_2^i$ values in Table 2. Let us note
that
\begin{eqnarray*}
2^{uv}-1 &=& \phi_u(2)\phi_v(2)\phi_{uv}(2),\\
2^{u}-1 &=& \phi_u(2),\\
2^{v}-1 &=& \phi_v(2).
\end{eqnarray*}
Hence, $L_2=\phi_v(2)\phi_{uv}(2)$ and $L_3=\phi_u(2)\phi_{uv}(2)$.

For the case $u$ and $v$ are both odd, other than $L_2,L_3$ and
$L_4$, the only possible index for a QC subcode is $\lcm(L_2,L_3)$.
However,
$$\lcm(L_2,L_3)=\phi_{uv}(2)\lcm(2^u-1,2^v-1)=\phi_{uv}(2)(2^u-1)(2^v-1)=L_1.$$
Hence $I=\{L_2,L_3,L_4\}$ in this case.

For $u=2$, $v$ odd and $v\not=3$, note that $L_1/3=L_2$. Therefore,
other than $L_2,L_3,L_3/3$ and $L_4$, the only other possible index
for a QC subcode is $\lcm(L_2,L_3/3)$. Since $\phi_u(2)=2^u-1=3$ in
this case, we have $L_3/3=\phi_{uv}(2)$ and this is a divisor of
$L_2$. Hence $\lcm(L_2,L_3/3)=L_2$ and there is no new contribution
to $I$ in this case.

When $u=2$ and $v=3$, it is easy to verify the set $I$ is as stated
in the proposition.
\end{proof}

We are ready to count QC subcodes of the dual of the BCH code in the
case $n=uv$.

\begin{thm}\label{bch count 2}
Consider the dual of the binary double-error-correcting BCH  code
$C$ of length $N=2^{uv}-1$, where $u$ and $v$ are distinct prime
numbers. Then $C$ has 3 cyclic subcodes (including itself) for any
$u$ and $v$. Moreover:

(i) If $u$ and $v$ are odd, then $C$ has $(2N(v,2^u)+N(v,2^u)^2-3)$
QC subcodes of index $L_2$ and $(2N(u,2^v)+N(u,2^v)^2-3)$ QC
subcodes of index $L_3$.

(ii) If $u=2$ and $v\not=3$ and odd, then $C$ has
$(N(uv,2)+N(v,2^u)+N(uv,2)N(v,2^u)-2N(u,2^v)-1)$ QC subcodes of
index $L_2$, $(N(u,2^v)^2-1)$ QC subcodes of index $L_3$ and
$(2N(u,2^v)-2)$ QC subcodes of index $L_3/3$.

(iii) If $u=2$ and $v=3$, then $C$ has $124194$ QC subcodes of index
$21$, $99$ QC subcodes of index $9$, $84$ QC subcodes of index $7$
and $18$ QC subcodes of index $3$.
\end{thm}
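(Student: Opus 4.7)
The plan is to apply Theorem \ref{enum-main} in the same spirit as Theorem \ref{bch count 1}: QC subcodes of $C$ are in bijection with pairs $(V,W)$ of $\F_2$-subspaces of $\F_{2^{uv}}$ via
\[
C' \;=\;\left\{\left(\Tr_{\F_{2^{uv}}/\F_2}(\lambda\alpha^k+\beta\alpha^{3k})\right)_{k}\;:\;\lambda\in V,\ \beta\in W\right\},
\]
and the index of $C'$ equals $\lcm(\ell_V,\ell_W)$, where $\ell_V=\ell_1^a$ if $V$ is maximally defined over the $a$-th intermediate field (and $\ell_V=1$ if $V=0$), with a similar convention for $W$. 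The values $\ell_1^a,\ell_2^a$ are read off from Table 2. So counting QC subcodes of a given index $\ell\in I$ reduces to (i) listing all pairs of maximal subfields whose $\lcm$ of contributions equals $\ell$, and (ii) counting subspaces maximally defined over each intermediate field via Theorem \ref{max field}. For the extension $\F_{2^{uv}}/\F_2$ the inclusion--exclusion takes the simple form
\begin{align*}
\#\{V:\text{max over }\F_2\}&=N(uv,2)-N(v,2^u)-N(u,2^v)+1,\\
\#\{V:\text{max over }\F_{2^u}\}&=N(v,2^u)-1,\\
\#\{V:\text{max over }\F_{2^v}\}&=N(u,2^v)-1,\\
\#\{V:\text{max over }\F_{2^{uv}}\}&=1.
\end{align*}

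For case (i) ($u,v$ both odd), Table 2 gives symmetric contributions $\ell_1^a=\ell_2^a=L_a$ from both $V$ and $W$. Using the divisibility structure $L_4\mid L_2\mid L_1$, $L_4\mid L_3\mid L_1$, $\lcm(L_2,L_3)=L_1$ established in the proof of Proposition \ref{bchindices2}, the pairs producing $\lcm=L_2$ are precisely those with both $V$ and $W$ contributions in $\{L_2,L_4,\emptyset\}$ (i.e.\ defined over $\F_{2^u}$ or trivial) with at least one equal to $L_2$. This gives $(N(v,2^u)+1)^2-4=N(v,2^u)^2+2N(v,2^u)-3$. By the symmetry $u\leftrightarrow v$ the same reasoning yields the count for index $L_3$, and the three cyclic subcodes correspond to the pairs $(V,W)\in\{(\F_{2^{uv}},0),(0,\F_{2^{uv}}),(\F_{2^{uv}},\F_{2^{uv}})\}$.

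For case (ii) ($u=2,v\neq 3$ odd) one notes that $\ell_2^1=L_1/3=L_2$ (since $L_1=3L_2$), so $W$ maximally over $\F_2$ \emph{also} contributes $L_2$; this is the main reason the index-$L_2$ count inflates. Similarly, $L_3/3\mid L_2$ since $L_2=(2^v-1)L_3/3$, so $W$ maximally over $\F_{2^v}$ contributes a divisor of $L_2$. A careful enumeration of pairs $(V,W)$ with both contributions dividing $L_2$ and at least one equal to $L_2$ yields $(N(v,2^u)+1)(N(uv,2)+1)-2(N(u,2^v)+1)$, which simplifies to the claimed formula. The index $L_3$ subcodes require $V$ contributing $L_3$ (no $W$-contribution equals $L_3$), so the count factorises as $(N(u,2^v)-1)(N(u,2^v)+1)$. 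The index $L_3/3$ subcodes force $V\in\{0,\F_{2^{uv}}\}$ and $W$ maximally over $\F_{2^v}$, giving $2(N(u,2^v)-1)$.

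Case (iii) ($u=2,v=3$) is the most delicate, and is where the main obstacle lies: the explicit arithmetic $L_1=63,L_2=21,L_3=9$ must be used since $9\mid L_1$ forces finer divisibility analysis than in (ii). Concretely, $V$-contributions are $\{63,21,9,1\}$ and $W$-contributions are $\{21,7,3,1\}$; for each $\ell\in\{21,9,7,3\}$ I list divisors in each contribution set and isolate the pairs achieving $\lcm=\ell$. For $\ell=21$: $V$-contribution in $\{21,1\}$, $W$-contribution in $\{21,7,3,1\}$, with at least one equal to $21$; the count is $(N(3,4)+1)(N(6,2)+1)-2(N(3,4)+N(2,8))=44\cdot 2825-106=124194$. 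For $\ell=9$: only $V$ can contribute $9$, giving $(N(2,8)-1)\cdot(N(2,8)+1)=9\cdot 11=99$. For $\ell=7$: $V\in\{0,\F_{64}\}$ and $W$ maximally over $\F_4$, giving $2(N(3,4)-1)=84$. For $\ell=3$: $V\in\{0,\F_{64}\}$ and $W$ maximally over $\F_8$, giving $2(N(2,8)-1)=18$. The three cyclic subcodes come from $(V,W)\in\{(\F_{64},0),(0,\F_{64}),(\F_{64},\F_{64})\}$ as before. Putting these together finishes the proof.
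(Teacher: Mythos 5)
Your proposal is correct and follows essentially the same route as the paper: reduce to counting pairs $(V,W)$ via Theorem \ref{enum-main}, read the index as $\lcm$ of the contributions from Table 2, and count subspaces maximally defined over each intermediate field by inclusion--exclusion. The only difference is organizational — you package each count as a product minus corrections (e.g.\ $(N(v,2^u)+1)^2-4$ and $(N(v,2^u)+1)(N(uv,2)+1)-2(N(u,2^v)+1)$) where the paper explicitly lists and sums the admissible pairs of maximal fields; the resulting formulas agree in every case.
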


\begin{proof}
Note that the number of subspaces in $\F_2^{uv}$ that are maximally
defined over $\F_2$, $\F_{2^u}$ and $\F_{2^v}$ are respectively
given by $(N(uv,2)-N(u,2^v)-N(v,2^u)+1)$, $(N(v,2^u)-1)$ and
$(N(u,2^v)-1)$. Consider a subcode $C'$ as in (\ref{QCsubcode}) of
$C$. We will use Table 2 and the combination of
$\lcm(\ell_1^i,\ell_2^j)$'s that lead to the corresponding index.
For all possibilities of $u$ and $v$, it is clear that there are 3
cyclic subcodes of $C$ obtained by the choices
$(V=\F_{2^{uv}},W=0)$, $(V=0,W=\F_{2^{uv}})$ and
$(V=\F_{2^{uv}},W=\F_{2^{uv}})$.

If $u$ and $v$ are both odd, then the choices of subspaces $V,W$
that yield the indices $L_2,L_3$ are as follows:
\[\begin{array}{ll}
L_2: & \mbox{$V$: maximally defined over $\F_{2^{u}}$}, \ W=0\\
&V=0, \mbox{$W$: maximally defined over $\F_{2^u}$}\\
&\mbox{$V$: maximally defined over $\F_{2^{u}}$}, \mbox{$W$: maximally defined over $\F_{2^{u}}$}\\
&\mbox{$V$: maximally defined over $\F_{2^{u}}$}, \mbox{$W$: maximally defined over $\F_{2^{uv}}$}\\
&\mbox{$V$: maximally defined over $\F_{2^{uv}}$}, \mbox{$W$:
maximally defined over $\F_{2^u}$}
\end{array}\]
Total number of such subspaces is
$$(N(v,2^u)-1)+(N(v,2^u)-1)+(N(v,2^u)-1)N(v,2^u)+(N(v,2^u)-1)=2N(v,2^u)+N(v,2^u)^2-3.$$
\[\begin{array}{ll}
L_3: & \mbox{$V$: maximally defined over $\F_{2^{v}}$}, \ W=0\\
&V=0, \mbox{$W$: maximally defined over $\F_{2^v}$}\\
&\mbox{$V$: maximally defined over $\F_{2^{v}}$}, \mbox{$W$: maximally defined over $\F_{2^{v}}$}\\
&\mbox{$V$: maximally defined over $\F_{2^{v}}$}, \mbox{$W$: maximally defined over $\F_{2^{uv}}$}\\
&\mbox{$V$: maximally defined over $\F_{2^{uv}}$}, \mbox{$W$:
maximally defined over $\F_{2^v}$}
\end{array}\]
Total number of such subspaces is $$2N(u,2^v)+N(u,2^v)^2-3.$$

If $u=2$ and $v\not=3$, the choices of subspaces $V,W$ that yield
the indices $L_2,L_3,L_3/3$ are alisted below. Note that $L_1/3=L_2$
when $u=2$. Moreover, $\lcm(L_2,L_3/3)=L_2$ (cf. proof of
Proposition \ref{bchindices}).
\[\begin{array}{ll}
L_2: & \mbox{$V$: maximally defined over $\F_{2^{u}}$}, \ W=0\\
&V=0, \mbox{$W$: maximally defined over $\F_{2}$}\\
&V=0, \mbox{$W$: maximally defined over $\F_{2^u}$}\\
&\mbox{$V$: maximally defined over $\F_{2^{u}}$}, \mbox{$W$: defined over $\F_{2}$}\\
&\mbox{$V$: maximally defined over $\F_{2^{uv}}$}, \mbox{$W$: maximally defined over $\F_{2}$}\\
&\mbox{$V$: maximally defined over $\F_{2^{uv}}$}, \mbox{$W$:
maximally defined over $\F_{2^u}$}
\end{array}\]
Total number of such subspaces is
$$
N(uv,2)+N(v,2^u)+N(uv,2)N(v,2^u)-2N(u,2^v)-1.$$
\[\begin{array}{ll}
L_3: & \mbox{$V$: maximally defined over $\F_{2^{v}}$}, \ W=0\\
&\mbox{$V$: maximally defined over $\F_{2^{v}}$}, \mbox{$W$: maximally defined over $\F_{2^v}$}\\
&\mbox{$V$: maximally defined over $\F_{2^{v}}$}, \mbox{$W$:
maximally defined over $\F_{2^{uv}}$}
\end{array}\]
Total number of such subspaces is $$N(u,2^v)^2-1.$$
\[\begin{array}{ll}
L_3/3: & V=0, \ \mbox{$W$: maximally defined over $\F_{2^v}$}\\
&\mbox{$V$: maximally defined over $\F_{2^{uv}}$}, \mbox{$W$:
maximally defined over $\F_{2^v}$}
\end{array}\]
Total number of such subspaces is $$2(N(u,2^v)-1).$$

Finally, if $u=2$ and $v=3$, we have
$$\ell_1^2=21,\ \ell_1^3=9,\ \ell_1^4=1, \ \ell_2^1=21,\ \ell_2^2=7,\ \ell_2^3=3,\ \ell_2^4=1.$$
Moreover, $N(3,4)=43$, $N(2,8)=10$ and $N(6,2)=2824$. Then,
\[\begin{array}{ll}
21: & \mbox{$V$: maximally defined over $\F_{2^{2}}$}, \ W=0\\
&V=0, \mbox{$W$: maximally defined over $\F_{2}$}\\
&\mbox{$V$: maximally defined over $\F_{2^{2}}$}, \mbox{$W$: defined over $\F_{2}$}\\
&\mbox{$V$: maximally defined over $\F_{2^{6}}$}, \mbox{$W$:
maximally defined over $\F_{2}$}
\end{array}\]
Total number of such subspaces is
\[\begin{array}{l}
\Bigl((N(3,4)-1)+\left(N(6,2)-N(3,4)-N(2,8)+1\right)+(N(3,4)-1)N(6,2)+ \Bigr.\\
\hspace{7cm} \Bigl. \left(N(6,2)-N(3,4)-N(2,8)+1\right)
\Bigr)=124194.\end{array}\]
\[\begin{array}{ll}
9: & \mbox{$V$: maximally defined over $\F_{2^{3}}$}, \ W=0\\
&\mbox{$V$: maximally defined over $\F_{2^{3}}$}, \mbox{$W$: maximally defined over $\F_{2^3}$}\\
&\mbox{$V$: maximally defined over $\F_{2^{3}}$}, \mbox{$W$:
maximally defined over $\F_{2^6}$}
\end{array}\]
Total number of such subspaces is
$$(N(2,8)-1)+(N(2,8)-1)N(2,8)=99.$$
\[\begin{array}{ll}
7: & V=0, \ \mbox{$W$: maximally defined over $\F_{2^2}$}\\
&\mbox{$V$: maximally defined over $\F_{2^{6}}$}, \mbox{$W$:
maximally defined over $\F_{2^2}$}
\end{array}\]
Total number of such subspaces is $$2(N(3,4)-1)=84.$$
\[\begin{array}{ll}
3: & V=0, \ \mbox{$W$: maximally defined over $\F_{2^3}$}\\
&\mbox{$V$: maximally defined over $\F_{2^{6}}$}, \mbox{$W$:
maximally defined over $\F_{2^3}$}
\end{array}\]
Total number of such subspaces is $$2(N(2,8)-1)=18.$$
\end{proof}

\subsection{Dual of the $p$-ary BCH Code of Designed Distance 3}\label{bch-p}
Let $p$ be an odd prime. Dual of the $p$-ary BCH code of length
$N=p^n-1$ and designed distance 3 has the following trace
representation:
$$C=\left\{\left(\Tr_{\F_{p^n}/\F_p}\left(\lambda\alpha^{k} + \beta \alpha^{2k} \right)\right)_{0\leq k \leq N-1} ;
\lambda , \beta \in \F_{p^n} \right\}.$$ \textbf{As in Section
\ref{distinct primes}, we will consider the case $n=uv$ where $u$
and $v$ are distinct prime numbers.} In this case  $|Cyc_p(2)|=n$
for any odd prime $p$ and hence Theorem \ref{enum-main} applies. We
have
\begin{eqnarray*}
L_1&=&\frac{p^{uv}-1}{p-1}=\phi_u(p)\phi_v(p)\phi_{uv}(p)\\
L_2&=&\frac{p^{uv}-1}{p^u-1}=1+p^u+p^{2u}+\cdots +p^{(v-1)u}=\phi_v(p)\phi_{uv}(p) \\
L_3&=&\frac{p^{uv}-1}{p^v-1}=1+p^v+p^{2v}+\cdots +p^{(u-1)v}=\phi_u(p)\phi_{uv}(p) \\
L_4&=&1.
\end{eqnarray*}
Note that any positive power of $p$ is congruent to 1 mod 2.
Therefore $L_2\equiv v$ and $L_3\equiv u$ mod 2. This implies that
when $u$ and $v$ are both odd primes, no $L_i$ is divisible by 2. If
$u=2$ and $v$ is an odd prime, then $L_1$ and $L_3$ are divisible by
2, $L_2$ is not. Combining these observations, values of $\ell_1^i$
and $\ell_2^j$ are presented in Table 3. Note that when $u=2$,
$L_1/2$ and $L_3/2$ are not equal to $L_i$ for any $1\leq i \leq 4$
regardless of choice of the odd primes $p$ and $v$.

\begin{table}[h]
\begin{tabular}{|c|c|}
\multicolumn{2}{c}{$u,v$: odd}     \\
  \hline
  $\ell_1^1=L_1$ & $\ell_2^1=L_1$ \\
  \hline
  $\ell_1^2=L_2$ & $\ell_2^2=L_2$ \\
  \hline
    $\ell_1^3=L_3$ & $\ell_2^3=L_3$ \\
  \hline
    $\ell_1^4=L_4$ & $\ell_2^4=L_4$ \\
  \hline
\end{tabular}
\quad \quad \quad \quad
\begin{tabular}{|c|c|}
\multicolumn{2}{c}{$u=2$, $v$: odd}     \\
  \hline
  $\ell_1^1=L_1$ & $\ell_2^1=L_1/2$ \\
  \hline
  $\ell_1^2=L_2$ & $\ell_2^2=L_2$ \\
  \hline
    $\ell_1^3=L_3$ & $\ell_2^3=L_3/2$ \\
  \hline
    $\ell_1^4=L_4$ & $\ell_2^4=L_4$ \\
  \hline
\end{tabular}
\caption{$\ell_1^i$ and $\ell_2^i$ values for the dual of the
$p$-ary BCH code for $n={uv}$}
\end{table}

The following result describes all possible indices for QC subcodes
in this case.

\begin{prop}\label{bchindices3}
For the dual of the $p$-ary BCH code $C$ of length $N=p^{uv}-1$ and
designed distance 3, indices of QC subcodes are follows:
$$I=\left\{\begin{array}{ll}\{L_1,L_2,L_3,L_4\}& \mbox{if $u,v$ are both odd}\\ \{L_1,L_1/2,L_2,L_3,L_3/2,L_4\}& \mbox{if $u=2$, $v$: odd}
\end{array}\right.$$
\end{prop}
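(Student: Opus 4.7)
My approach is to apply Theorem~\ref{index theorem}(ii) directly to the values of $\ell_1^i$ and $\ell_2^j$ recorded in Table~3. By that theorem, the index set $I$ consists of the lcms $\lcm(\ell_{t_1}^{a_1},\dots,\ell_{t_m}^{a_m})$ with the $t_j$ distinct, minus $\{N\}$. Since $s=2$, the candidates reduce to the entries themselves together with the lcms $\lcm(\ell_1^{a_1},\ell_2^{a_2})$. A key point that distinguishes this setting from Proposition~\ref{bchindices2} is that $p$ is odd, so $L_1=(p^{uv}-1)/(p-1)<N$ is not excluded.

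If $u$ and $v$ are both odd, every entry of Table~3 lies in $\{L_1,L_2,L_3,L_4\}$. Using $L_4=1$ and $L_2,L_3\mid L_1$, the only lcm I still need to compute is $\lcm(L_2,L_3)$. Writing $L_1/L_2=(p^u-1)/(p-1)$ and $L_1/L_3=(p^v-1)/(p-1)$, I invoke the classical identity $\gcd(p^u-1,p^v-1)=p^{\gcd(u,v)}-1=p-1$ (which holds because $u,v$ are distinct primes) to conclude $\gcd(L_1/L_2,L_1/L_3)=1$, hence $\lcm(L_2,L_3)=L_1$. This yields $I=\{L_1,L_2,L_3,L_4\}$.

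For $u=2$ and $v$ odd, Table~3 contributes the new values $L_1/2$ and $L_3/2$, and I must verify that no pairwise lcm escapes the claimed six-element set. The computations rest on the cyclotomic factorizations $L_1=(p+1)\phi_v(p)\phi_{2v}(p)$, $L_2=\phi_v(p)\phi_{2v}(p)$, $L_3=(p+1)\phi_{2v}(p)$, together with three elementary facts: $\phi_v(p)=1+p+\cdots+p^{v-1}$ is a sum of an odd number of odd terms and hence is odd; $\phi_{2v}(p)=(p^v+1)/(p+1)=p^{v-1}-p^{v-2}+\cdots+1$ is odd for the same reason; and modulo $p+1$ one has $\phi_v(p)\equiv\sum_{i=0}^{v-1}(-1)^i\equiv 1$, so $\gcd(\phi_v(p),(p+1)/2)=1$. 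Using these, the nontrivial lcms evaluate as follows: $\lcm(L_2,L_3)=L_1$ by the argument of the odd case; $\lcm(L_2,L_1/2)=L_1/2$ because $L_1=(p+1)L_2$ gives $L_2\mid L_1/2$; $\lcm(L_2,L_3/2)=\phi_{2v}(p)\lcm(\phi_v(p),(p+1)/2)=L_1/2$; and $\lcm(L_3,L_1/2)=\phi_{2v}(p)\cdot\frac{p+1}{2}\cdot\lcm(2,\phi_v(p))=L_1$ since $\phi_v(p)$ is odd. Every remaining combination reduces immediately to one of the six listed values.

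The main obstacle is the bookkeeping in the $u=2$ case: the halved values $L_1/2$ and $L_3/2$ interact with $L_2$ and $L_3$ in several ways, and the worry is that some lcm might introduce a seventh index. The technical heart of the argument is the coprimality $\gcd(\phi_v(p),p+1)=1$ for odd $v$, which is precisely what collapses every such lcm back into $\{L_1,L_1/2,L_2,L_3,L_3/2,L_4\}$.
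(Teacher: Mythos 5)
Your proposal is correct and follows essentially the same route as the paper: take the $\ell_1^i,\ell_2^j$ values from Table~3 as given, note that $L_1\neq N$ is not excluded since $p$ is odd, and check that every pairwise lcm collapses into the claimed set via the cyclotomic factorizations $L_1=(p+1)\phi_v(p)\phi_{2v}(p)$, $L_2=\phi_v(p)\phi_{2v}(p)$, $L_3=(p+1)\phi_{2v}(p)$. You in fact supply more detail than the paper (which asserts the $u=2$ lcm identities "can be easily verified"), your coprimality arguments via $\gcd(p^u-1,p^v-1)=p^{\gcd(u,v)}-1$ and $\phi_v(p)\equiv 1 \pmod{p+1}$ being sound.
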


\begin{proof}
By Table 3, all index values in the statement belong to $I$. Again,
we need to check that $\lcm(\ell_1^i,\ell_2^j)$ values do not bring
any different index to $I$. One of the key observations for this
purpose is the following:
$$\lcm(\phi_u(p),\phi_v(p))=\phi_u(p)\phi_v(p)=\frac{(p^u-1)(p^v-1)}{(p-1)^2}.$$
Therefore $\lcm(L_2,L_3)=L_1$ in any case. Moreover, it is clear
that $L_i\mid L_1$ for $i=2,3,4$. Therefore the result follows
immediately for the case $u$ and $v$ are both odd.

For $u=2$, $v$: odd case, the following extra least common multiple
values, compared to previous case, can be easily verified:
\[\begin{array}{ccc}\label{lcms}
\lcm(L_1,L_1/2)=L_1 & \lcm(L_1,L_3/2)=L_1 & \lcm(L_2,L_1/2)=L_1/2 \\
\lcm(L_2,L_3/2)=L_1/2 & \lcm(L_3,L_1/2)=L_1 & \lcm(L_3,L_3/2)=L_3
\end{array}\]
Therefore the result also follows in the second case.
\end{proof}

We count the QC subcodes in the following result.

\begin{thm}\label{bch count 3}
Consider the dual of the $p$-ary BCH  code $C$ of length
$N=p^{uv}-1$, where $u$ and $v$ are distinct prime numbers. Let
$A=N(uv,p)-N(u,p^v)-N(v,p^u)+1$. Then $C$ has 3 cyclic subcodes
(including itself) for any $u$ and $v$. Moreover:

(i) If $u$ and $v$ are odd, then $C$ has
$$A+A\bigl(N(uv,p)+N(u,p^v)+N(v,p^u)\bigr)+2\bigl(N(u,p^v)N(v,p^u)-N(u,p^v)-N(v,p^u)+1 \bigr)$$
QC subcodes of index $L_1$,
$$2N(v,p^u)+N(v,p^u)^2-3$$ QC subcodes of index $L_2$, and
$$2N(u,p^v)+N(u,p^v)^2-3$$ QC subcodes of index $L_3$.

(ii) If $u=2$ and $v$ an odd prime, $C$ has
$$A+AN(uv,p)+(N(u,p^v)-1)(A+N(v,p^u)-1)$$ QC subcodes of index $L_1$,
$$2A+(N(v,p^u)-1)(A+N(u,p^v)-1)$$ QC subcodes of index $L_1/2$,
$$2N(v,p^u)+N(v,p^u)^2-3$$ QC subcodes of index $L_2$,
$$N(u,p^v)^2-1$$ QC subcodes of index $L_3$ and
$$2(N(u,p^v)-1)$$ QC subcodes of index $L_3/2$.
\end{thm}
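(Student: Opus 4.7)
The plan is to follow the strategy of the proof of Theorem \ref{bch count 2} verbatim, replacing $\F_2$ with $\F_p$. By Theorem \ref{enum-main}, each QC subcode of $C$ corresponds bijectively to a pair $(V,W)$ of $\F_p$-subspaces of $\F_{p^{uv}}$, and I would classify such pairs according to the maximal intermediate subfield of $\F_{p^{uv}}/\F_p$ over which each of $V$ and $W$ is defined; the zero subspace is treated as ``maximally'' defined over $\F_{p^{uv}}$, so that its index contribution is $1$.

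First I would apply Theorem \ref{max field} to the extension $\F_{p^{uv}}/\F_p$. Because $uv$ is squarefree with exactly two prime factors, the subfield lattice is the simple diamond $\F_p\subset\F_{p^u},\F_{p^v}\subset\F_{p^{uv}}$, and the inclusion--exclusion formula (\ref{incl-excl}) immediately yields $A=N(uv,p)-N(u,p^v)-N(v,p^u)+1$ nonzero subspaces maximal over $\F_p$, $N(v,p^u)-1$ maximal over $\F_{p^u}$, $N(u,p^v)-1$ maximal over $\F_{p^v}$, and exactly $1$ maximal over $\F_{p^{uv}}$ (namely $\F_{p^{uv}}$ itself). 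Independently of parity, the three cyclic subcodes ($\ell=L_4=1$) arise from the three choices $(V,W)\in\{(\F_{p^{uv}},0),(0,\F_{p^{uv}}),(\F_{p^{uv}},\F_{p^{uv}})\}$.

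Next, for each remaining target index $\ell\in I$ taken from Proposition \ref{bchindices3}, I would enumerate every pair of maximal-subfield labels whose index contributions satisfy $\lcm(\ell_1^{a_1},\ell_2^{a_2})=\ell$ by reading Table 3, multiply the corresponding subspace counts listed above, and sum. In case (i) the only nontrivial lcm identities are $L_j\mid L_i$ for $i<j$ and $\lcm(L_2,L_3)=\phi_u(p)\phi_v(p)\phi_{uv}(p)=L_1$, already established in the proof of Proposition \ref{bchindices3}; after routine algebra the three stated closed forms drop out. For example, index $L_2$ in case (i) collects precisely the label pairs in which one of $V,W$ is maximal over $\F_{p^u}$ while the other is zero, maximal over $\F_{p^u}$, or maximal over $\F_{p^{uv}}$, giving $4(N(v,p^u)-1)+(N(v,p^u)-1)^2=N(v,p^u)^2+2N(v,p^u)-3$.

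For case (ii) with $u=2$, the $\ell_2^j$ column of Table 3 introduces the extra indices $L_1/2$ and $L_3/2$, and I would need the additional lcm identities $\lcm(L_2,L_1/2)=\lcm(L_2,L_3/2)=L_1/2$, $\lcm(L_3,L_3/2)=L_3$, and $\lcm(L_3,L_1/2)=L_1$, all of which follow immediately from $\gcd(L_2,2)=1$ together with the factorizations $L_1=2(L_1/2)$ and $L_3=2(L_3/2)$ in this case. The main obstacle is thus purely combinatorial bookkeeping: one must partition the finitely many label pairs $(V\text{-type},W\text{-type})$ according to which index in $I$ each produces and check that the partial sums simplify to the claimed expressions. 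No conceptual input beyond Theorems \ref{enum-main} and \ref{max field} and the lcm identities above is required.
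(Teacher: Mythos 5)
Your proposal is correct and follows essentially the same route as the paper: reduce to counting pairs $(V,W)$ via Theorem \ref{enum-main}, count subspaces maximal over each subfield of the diamond lattice via Theorem \ref{max field} (giving $A$, $N(v,p^u)-1$, $N(u,p^v)-1$, and $1$), and then sort label pairs by $\lcm(\ell_1^{a_1},\ell_2^{a_2})$ using Table 3 and the lcm identities from Proposition \ref{bchindices3}; your spot-check $4(N(v,p^u)-1)+(N(v,p^u)-1)^2=N(v,p^u)^2+2N(v,p^u)-3$ for index $L_2$ agrees with the paper's tally. The remaining work you defer is exactly the case-by-case bookkeeping the paper itself carries out.
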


\begin{proof}
Let us note that the number of subspaces in $\F_p^{uv}$ that are
maximally defined over $\F_p$, $\F_{p^u}$ and $\F_{p^v}$ are given
by $A$, $(N(v,p^u)-1)$ and $(N(u,p^v)-1)$, respectively. Consider a
subcode $C'$ of $C$.
$$C'=\left\{\left(\Tr_{\F_{p^n}/\F_p}\left(\lambda\alpha^{k} + \beta \alpha^{2k} \right)\right)_{0\leq k \leq N-1} ;
\lambda \in V, \beta \in W \right\}.$$ We will proceed as in the
proof of Theorem \ref{bch count 2} and use Table 3 and the
combination of $\lcm(\ell_1^i,\ell_2^j)$'s that lead to the
corresponding index (cf. proof of Proposition \ref{bchindices3}).
For all possibilities of $u$ and $v$, it is clear that there are 3
cyclic subcodes of $C$.

If $u$ and $v$ are both odd, then the choices of subspaces $V,W$
that yield index $L_1$ are as follows:
\[\begin{array}{ll}
L_1: & \mbox{$V$: maximally defined over $\F_{p}$}, \ W=0\\
& V=0 \ \mbox{$W$: maximally defined over $\F_{p}$}\\
&\mbox{$V$: maximally defined over $\F_{p}$}, \mbox{$W$: defined over $\F_{p}$}\\
&\mbox{$V$: maximally defined over $\F_{p^{u}}$}, \mbox{$W$: maximally defined over $\F_{p}$}\\
&\mbox{$V$: maximally defined over $\F_{p^{u}}$}, \mbox{$W$: maximally defined over $\F_{p^v}$}\\
&\mbox{$V$: maximally defined over $\F_{p^{v}}$}, \mbox{$W$: maximally defined over $\F_{p}$}\\
&\mbox{$V$: maximally defined over $\F_{p^{v}}$}, \mbox{$W$: maximally defined over $\F_{p^u}$}\\
&\mbox{$V$: maximally defined over $\F_{p^{uv}}$}, \mbox{$W$:
maximally defined over $\F_{p}$}
\end{array}\]
Total number of such subspaces is
$$2A+AN(uv,p)+(N(v,p^u)-1)(A+N(u,p^v)-1)+(N(u,p^v)-1)(A+N(v,p^u)-1)+A,$$
which yields the desired result. For index $L_2$, we have
\[\begin{array}{ll}
L_2: & \mbox{$V$: maximally defined over $\F_{p^u}$}, \ W=0\\
& V=0 \ \mbox{$W$: maximally defined over $\F_{p^u}$}\\
&\mbox{$V$: maximally defined over $\F_{p^u}$}, \mbox{$W$: maximally defined over $\F_{p^u}$}\\
&\mbox{$V$: maximally defined over $\F_{p^{u}}$}, \mbox{$W$: maximally defined over $\F_{p^{uv}}$}\\
&\mbox{$V$: maximally defined over $\F_{p^{uv}}$}, \mbox{$W$: maximally defined over $\F_{p^u}$}\\
\end{array}\]
Total number of such subspaces is
$$2(N(v,p^u)-1)+(N(v,p^u)-1)N(v,p^u)+(N(v,p^u)-1)=2N(v,p^u)+N(v,p^u)^2-3.$$
The result for index $L_3$ follows identically.

When $u=2$ and $v$ an odd prime, the choices of $V,W$ yielding index
$L_1$ are as follows:
\[\begin{array}{ll}
L_1: & \mbox{$V$: maximally defined over $\F_{p}$}, \ W=0\\
&\mbox{$V$: maximally defined over $\F_{p}$}, \mbox{$W$: defined over $\F_{p}$}\\
&\mbox{$V$: maximally defined over $\F_{p^{v}}$}, \mbox{$W$: maximally defined over $\F_{p}$}\\
&\mbox{$V$: maximally defined over $\F_{p^{v}}$}, \mbox{$W$: maximally defined over $\F_{p^u}$}\\
\end{array}\]
Total number of such subspaces is
$$A+AN(uv,p)+(N(u,p^v)-1)(A+N(v,p^u)-1).$$ For the other indices, we
have the following:
\[\begin{array}{ll}
L_1/2: & V=0, \ \mbox{$W$: maximally defined over $\F_{p}$}\\
&\mbox{$V$: maximally defined over $\F_{p^u}$}, \mbox{$W$: maximally defined over $\F_{p}$}\\
&\mbox{$V$: maximally defined over $\F_{p^u}$}, \mbox{$W$: maximally defined over $\F_{p^v}$}\\
&\mbox{$V$: maximally defined over $\F_{p^{uv}}$}, \mbox{$W$: maximally defined over $\F_{p}$}\\
\end{array}\]
Total number of such subspaces is
$$A+(N(v,p^u)-1)(A+N(u,p^v)-1)+A.$$
\[\begin{array}{ll}
L_2: & \mbox{$V$: maximally defined over $\F_{p^u}$}, \ W=0\\
&V=0, \ \mbox{$W$: maximally defined over $\F_{p^u}$}\\
&\mbox{$V$: maximally defined over $\F_{p^u}$}, \mbox{$W$: maximally defined over $\F_{p^u}$}\\
&\mbox{$V$: maximally defined over $\F_{p^u}$}, \mbox{$W$: maximally defined over $\F_{p^{uv}}$}\\
&\mbox{$V$: maximally defined over $\F_{p^{uv}}$}, \mbox{$W$: maximally defined over $\F_{p^u}$}\\
\end{array}\]
Total number of such subspaces is
$$2(N(v,p^u)-1)+(N(v,p^u)-1)N(v,p^u)+(N(v,p^u)-1).$$
\[\begin{array}{ll}
L_3: & \mbox{$V$: maximally defined over $\F_{p^v}$}, \ W=0\\
&\mbox{$V$: maximally defined over $\F_{p^v}$}, \mbox{$W$: maximally defined over $\F_{p^v}$}\\
&\mbox{$V$: maximally defined over $\F_{p^v}$}, \mbox{$W$: maximally defined over $\F_{p^{uv}}$}\\
\end{array}\]
Total number of such subspaces is $$2(N(u,p^v)-1)+(N(u,p^v)-1)^2.$$
\[\begin{array}{ll}
L_3/2: & V=0, \ \mbox{$W$: maximally defined over $\F_{p^v}$}\\
&\mbox{$V$: maximally defined over $\F_{p^{uv}}$}, \mbox{$W$: maximally defined over $\F_{p^v}$}\\
\end{array}\]
Total number of such subspaces is $2(N(u,p^v)-1).$
\end{proof}

\section{Examples}\label{examples}
Our results in Sections \ref{indices} and \ref{enum} yield an
algorithm, which can compute possible indices of QC subcodes of
a given cyclic code based on Theorem \ref{index theorem}, together
with the number of these subcodes by using Theorems \ref{enum-main}
and \ref{max field}. In \cite{Program}, we provide a
Magma code (\cite{BCP}), which illustrates the algorithm for length $q^n-1$ cyclic codes of the form
$$C=\left\{
\left(\Tr_{\F_{q^n}/\F_q}\left(\lambda_1\alpha^{ki_{1}}+\lambda_2\alpha^{ki_{2}}+
\lambda_3\alpha^{ki_{3}}\right) \right)_{0\leq k \leq q^n-2} ;
\lambda_j \in \F_{q^n},\ 1 \leq j \leq 3\right\},$$ where
$$n=u_1^{a_1}u_2^{a_2}u_3^{a_3}, \mbox{\ such\ that\ } a_i\geq 1,
u_i: \mbox{distinct\ primes\ for\ } 1 \leq i \leq 3.$$

In the following examples, we used the code in \cite{Program} for
some binary ($q=2$) and ternary ($q=3$) cyclic codes satisfying the
conditions above. Namely, Table \ref{table:binary table} presents
the indices and the appearances of QC subcodes of binary simplex
codes ($i_1=1, i_2=i_3=0$), dual of the double-error-correcting BCH
codes ($i_1=1, i_2=3, i_3=0$) and dual of the
triple-error-correcting BCH codes ($i_1=1, i_2=3, i_3=5$), for
various $n$ values. The results are listed in the form $[i,M_i]$, where $M_i$ is the number of proper nonzero QC subcodes of index $i$. In particular, for $i=1$ the corresponding count $M_1$ is the number of proper nonzero cyclic subcodes. Table \ref{table:ternary table} presents the
similar results for ternary simplex codes ($i_1=1, i_2=i_3=0$), dual
of the BCH codes of designed distance 3 ($i_1=1, i_2=2, i_3=0$) and
dual of the BCH codes of designed distance 5 ($i_1=1, i_2=2,
i_3=4$). Note that we do not consider prime $n$ values, since no
proper QC subcode occurs in this case. We do not consider cases where $q$-cyclotomic
coset mod $q^n-1$ for some $i_j$ has size less than $n$.

\begin{table}[H]
\begin{center}\small
\begin{tabular}{|C{0.5cm}|C{2.8cm}|C{5.1cm}|C{6.6cm}|}
\hline $n\backslash C$ & Simplex & Dual of Double-E.-C. BCH & Dual of Triple-E.-C. BCH \\\hline
6 & [1,0], [9,9], [21,42] & [1,2], [3,18], [7,84], [9,99],
[21,124194] & [1,6], [3,36], [7,168], [9,1287], [21,5468988]
\\\hline 8 & [1,0], [17,17], [85,510] & [1,2], [17,357],
[85,220697910] & [1,6], [17,190961], [51,150417870],
[85,116749194390]\\\hline 9 & [1,0], [73,146] & [1,2], [73,21900] &
[1,6], [73,3241784] \\\hline 10 & [1,0], [33,33], [341,12276] &
[1,2], [11,66], [33,1155], [341,2820939318120] & [1,6], [11,132],
[33,42735], [341,34635492948736680] \\\hline 12 & [1,0], [65,65],
[273,546], [585,5910], [1365,565721] & [1,2], [65,4485], [91,1092],
[195,391950], [273,299208], [455,37897860], [585,34614450],
[1365,276172787737667730] & [1,6], [13,260], [65,300495],
[91,73164], [117,23400], [195,26260650], [273,169888806360],
[455,2539156620], [585,207132845400], [819,146601246105077400],
[1365,156237298018977998951310]\\\hline 14 &  [1,0], [129,129],
[5461,51409854] & [1,2], [43,258], [129,16899],
[5461,209432100625503796112058] & [1,6], [43,516], [129,2247567],
[5461,10766874134934660085587731025396] \\\hline 15 & [1,0],
[1057,2114], [4681,617892] & [1,2], [1057,4477452],
[4681,381792995232] & [1,6], [1057,9474296888],
[4681,235907600998352976]\\\hline 16 & [1,0], [257,257],
[4369,78642], [21845,9370980720] & [1,2], [257,67077],
[4369,6225300720], $[21845,2\cdot 3\cdot 5\cdot 17\cdot 257\cdot
9632900474097094857135899]$ & [1,6], [257,17373971],
[4369,58338292825807289442], [13107,838758021781294495526312038290],
$[21845,2\cdot 3\cdot 5\cdot 7\cdot 11\cdot 17 \cdot 59 \cdot 257
\cdot 2062747\cdot 9632900474097094857135899]$ \\\hline 18 & [1,0],
[513,513], [4161,8322], [37449,195118125], $[87381, 2^3\cdot 3\cdot
5^2\cdot 11\cdot 19\cdot 73\cdot 370091]$ & [1,2], [171,1026],
[513,264195], [1387,16644], [4161,69272328], [12483,1624093999146],
[29127,28200771586760472], [37449,38069459320434786], $[87381,2\cdot
3^2\cdot 5^3\cdot 7\cdot 19\cdot 23\cdot 73\cdot 911\cdot
106077265549\cdot 1237940881586443]$ & [1,6], [171,2052],
[513,136588815], [1387,33288], [4161,576761402928],
[12483,13518958457429676], [29127,234743222688194168928],
[37449,7428358488736862865257616], $[87381, 2^2\cdot 3^2\cdot 5\cdot
7\cdot 19\cdot 73\cdot 2382323\cdot 2528261\cdot 25131697\cdot
143372569\cdot 5369043671723807]$ \\\hline 20 & [1,0], [1025,1025],
[33825,1151070], [69905,36070980], $[349525,5\cdot 7\cdot 41\cdot
43\cdot 39921132101]$ & [1,2], [1025,1054725], [11275,1181101350],
[33825,1323796103850], [69905,1301115742444320], $[349525, 2\cdot
3^2\cdot 5^2\cdot 11\cdot 17\cdot 19^2\cdot 31\cdot 41\cdot
5113\cdot 4182209\cdot 188408588933\cdot 147641569892759]$& [1,6],
[205,4100], [1025,1083202575], [6765,4600200],
[11275,1212991086450], [13981,144283920],
[33825,1525150786425400200],
[69905,3205081938871577654256028295040], $[209715, 2^3\cdot 3\cdot
5^2\cdot 11\cdot 31\cdot 37\cdot 41\cdot 43\cdot
908930777956680878604236175349273961]$, $[349525,2\cdot 3^2\cdot
5^2\cdot 11\cdot 31\cdot 41\cdot 89\cdot 126963961\cdot
\scriptstyle{795792305106258988205007754270563107304398999}]$\\\hline
\end{tabular}
\caption{Duals of binary BCH codes with designed distances 1, 3 and
5} \label{table:binary table}
\end{center}
\end{table}

\begin{table}[H]
\begin{center}\small
\begin{tabular}{|C{0.5cm}|C{2.9cm}|C{5.3cm}|C{6.3cm}|}
\hline $n\backslash C$ & Simplex & Dual of BCH with $d=3$ & Dual of
BCH with $d=5$
\\\hline 4 & [1,0], [10,10], [40,200] & [1,2], [5,20], [10,120],
[20,2400], [40,42400] & [1,6], [5,280], [10,30240], [20,508800],
[40,8988800]
\\\hline 6 & [1,0], [28,28], [91,182], [364,56630] & [1,2], [14,56],
[28,840], [91,33852], [182,10386376], [364,3196762296] & [1,6],
[7,112], [14,1680], [28,25200], [91,1917332872], [182,588204415344],
[364,181039089892752]\\\hline 8 & [1,0], [82,82], [820,9020],
[3280,127893760] & [1,2], [41,164], [82,6888], [410,757680],
[820,82118080], [1640,1164344791040], [3280,16357978191728640] &
[1,6], [41,14104], [82,578592], [205,1515360], [410,6960048480],
[820,10600942580628480], [1640,148923033457497538560],
[3280,2092232259971634166824960] \\\hline 9 & [1,0], [757,1514],
[9841,13721227572] & [1,2], [757,2298252],
[9841,188272127685375013488] & [1,6], [757,3484156088],
[9841,2583324994856249282153532653376] \\\hline 10 & [1,0],
[244,244], [7381,1225246], $[29524,2\cdot 3^2\cdot 17\cdot 61\cdot
136334867]$ &  [1,2], [122,488], [244,60024], [7381,1501232661500],
[14762,3118042234365339160], $[29524,2^3\cdot 5^2\cdot 11^2\cdot
61\cdot 105542903\cdot 41566356211]$ & [1,6], [61,976],
[122,120048], [244,14765904], [7381,3820376850953979715484712],
$[14762,2^4\cdot 3^2\cdot 7\cdot 11^2\cdot 19\cdot 61\cdot 71\cdot
191\cdot 4139226000747340297]$, $[29524,2^4\cdot 11^2\cdot 19\cdot
61\cdot 10103\cdot 16361\cdot 239527\cdot 40363307\cdot 4596044119]$
\\\hline 12 & [1,0], [730,730], [6643,13286], [20440,593480],
[66430,540023486], $[265720, 2\cdot 3^3\cdot 8378452950363007]$ &
[1,2], [365,1460],  [730,534360], [6643,176570940],
[10220,433900320], [20440,351798317920], [33215,7175655536140],
[66430,291618191759043240], $[132860,2^5\cdot 3\cdot 5\cdot 7\cdot
13\cdot 73\cdot 76623988461520162739]$, $[265720,2^5\cdot 5\cdot
7\cdot 13\cdot 73\cdot 192588767759642498898919144667]$ & [1,6],
[365,1071640], [730,391151520], [5110,317615034240],
[6643,2346274703864], [10220,257516368717440],
[20440,208789487298976640], [33215,3875117882212049705960],
$[66430,2^5\cdot 3\cdot 5\cdot 7\cdot 11\cdot 13\cdot 19\cdot
73\cdot 190845833\cdot 918851623\cdot 1129023677]$,
$[132860,2^7\cdot 3^3\cdot 5^3\cdot 7\cdot 13\cdot 19\cdot 73\cdot
2027338364818403272678960130077589]$, $[265720,2^7\cdot 5^2\cdot
7\cdot 13\cdot 31\cdot 73\cdot 181757219444968838257\cdot
773223758237056637579813]$ \\\hline 15 & [1,0], [59293,118586],
[551881,806850022], $[7174453,2\cdot 179\cdot 4561\cdot 357509\cdot
3559979471071921]$ & [1,2], [59293,14063113740],
[551881,651006961228800572], $[7174453,2^2\cdot 5\cdot 11^2\cdot
13\cdot 367\cdot 4561\cdot 101209\cdot 822407\cdot 100842919\cdot
9770548580137061374107091]$ & [1,6], [59293,1667716532673464],
[551881,525264982291624814236813816], $[7174453,2^3\cdot 11^2\cdot
13\cdot 521\cdot 4561\cdot 9993125731\cdot 152373840083\cdot
4006805689324561\cdot 13019832459914677\cdot 3778337670974685409]$
\\\hline
\end{tabular}
\caption{Duals of ternary BCH codes with designed distances 1, 3 and
5}\label{table:ternary table}
\end{center}
\end{table}

\end{document}